\def\Bra#1{\left\langle#1\right|}
\def\Ket#1{\left|#1\right\rangle}
\newcommand{\ket}[1]{|{#1}\rangle}
\DeclareMathOperator{\rank}{rank}
\DeclareMathOperator{\tr}{Tr}
\theoremstyle{plain}
\newtheorem{theorem}{Theorem}
\newtheorem{lemma}[theorem]{Lemma}
\newtheorem{corollary}[theorem]{Corollary}
\newtheorem{proposition}[theorem]{Proposition}
\theoremstyle{definition}
\newtheorem{definition}[theorem]{Definition}
\newtheorem{example}{Example}
\theoremstyle{remark}
\newtheorem{conjecture}{Conjecture}
\newtheorem{remark}[conjecture]{Remark}
\begin{document}

\title{Graph-associated entanglement cost of a multipartite state in exact and finite-block-length approximate constructions}

\author{Hayata Yamasaki}
\email{yamasaki@eve.phys.s.u-tokyo.ac.jp}
\author{Akihito Soeda}
\email{soeda@phys.s.u-tokyo.ac.jp}
\author{Mio Murao}
\email{murao@phys.s.u-tokyo.ac.jp}
\affiliation{Department of Physics, Graduate School of Science, The University of Tokyo, Tokyo, Japan}

\date{\today}

\begin{abstract}
    We introduce and analyze \textit{graph-associated entanglement cost}, a generalization of the entanglement cost of quantum states to multipartite settings.  We identify a necessary and sufficient condition for any multipartite entangled state to be constructible when quantum communication between the multiple parties is restricted to a  quantum network represented by a tree.  The condition for exact state construction is expressed in terms of the Schmidt ranks of the state defined with respect to edges of the tree.  We also study approximate state construction and provide a second-order asymptotic analysis.
\end{abstract}

\pacs{03.67.Mn, 03.67.Ac, 03.67.Bg}
\keywords{multipartite entanglement, entanglement cost, Schmidt rank, network, tree (graph theory), finite block length}

\maketitle

\section{\label{sec:1}Introduction}
Multipartite entanglement~\cite{rev1,rev2,RefWorks:111,rev3,rev4} serves as resource in multi-party quantum tasks, when the parties are restricted to local operations and classical communication (LOCC).  These nonlocal tasks include quantum cryptographic protocols~\cite{secret_sharing,8} and quantum computation~\cite{RefWorks:164}.  One way to characterize multipartite entanglement is to analyze the amount of quantum communication required for constructing the corresponding entangled state from a separable state.  Under LOCC, a noiseless quantum channel and bipartite maximally entangled state are equivalent by means of quantum teleportation~\cite{RefWorks:21}.
For bipartite states, there are two types of scenarios to consider when evaluating the amount of entanglement of a given state in terms of the consumed resource.  The first type is an asymptotic scenario, where the construction assumes an infinitely increasing number of copies of the given state.  The cost in this case is the minimum asymptotic rate of quantum communication per copy to achieve the construction and referred as the entanglement cost~\cite{RefWorks:139, RefWorks:131}.   While this asymptotic cost may be understood as a fundamental quantity in quantum information theory, the other type based on the communication cost in one-shot scenario is more relevant when the aim is to generate a finite number of copies of the target entangled state.  One-shot costs may be further distinguished by whether the state construction is required to be exact~\cite{RefWorks:157} or certain error in fidelity is tolerated~\cite{RefWorks:158,RefWorks:160}.

\begin{figure}
\centering
\includegraphics[width=8cm]{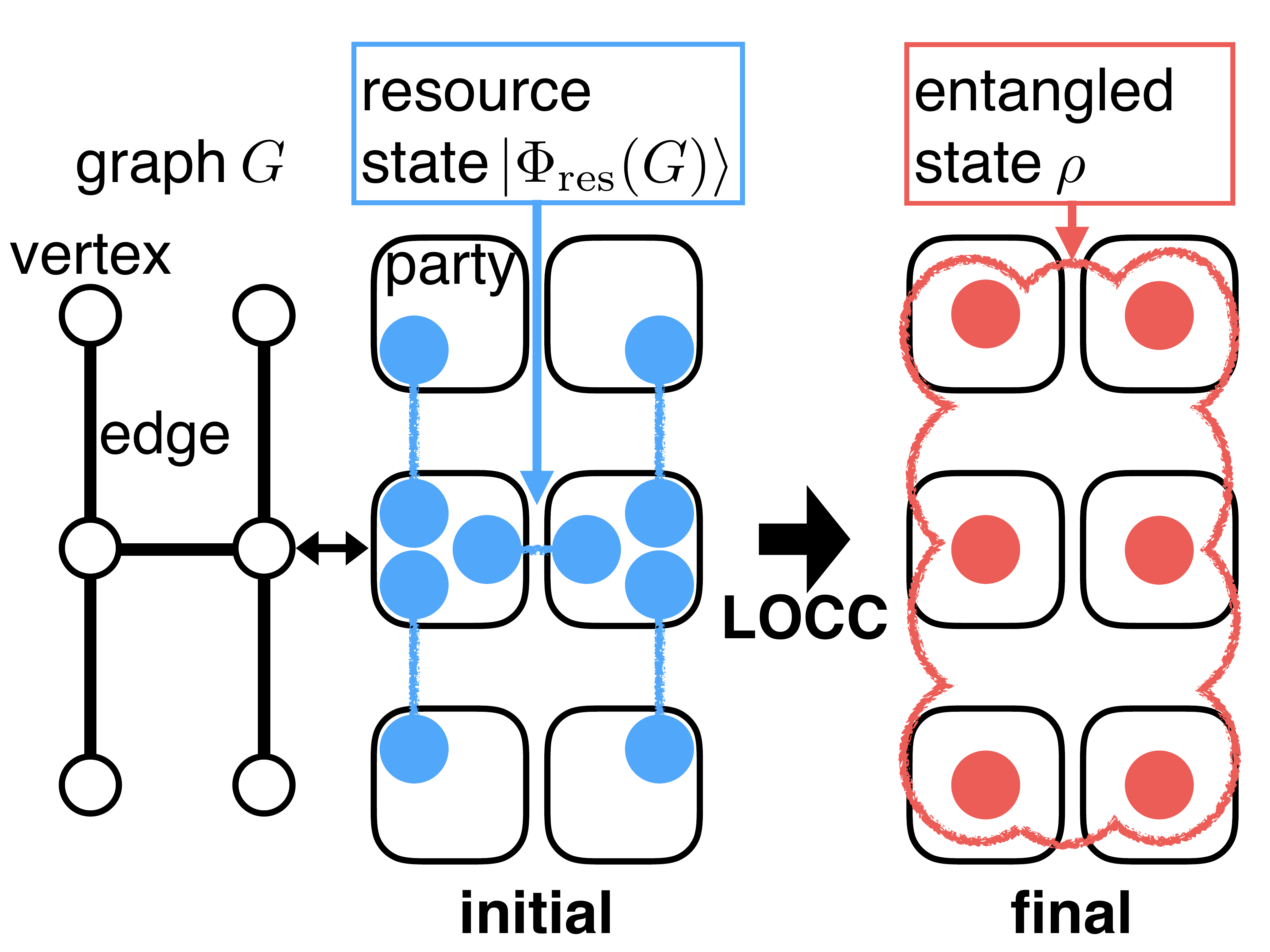}
\caption{(Color online) Schematic representation of our LOCC construction of a multipartite entangled state under graph $G$.  Each solid circle represents a quantum system, possibly of a different dimension, while each rounded square corresponds to a party, which may possess several quantum systems within.  The parties are connected by noiseless quantum communication channels specified by the edges of $G$.  Under LOCC, each use of a channel is equivalent to consuming a maximally entangled state, given by a pair of blue solid circles with a line in between (\textit{initial}).  The target state $\rho$ is represented by red solid circles  encircled in a bubble (\textit{final}).  The maximally entangled states all together comprise a resource state $\Ket{\Phi_{\textup{res}}(G)}$ for LOCC construction of $\rho$.
}
\label{fig:intro}
\end{figure}

In this paper, we introduce a characterization of multipartite entangled states by generalizing the cost-based measure of bipartite entangled states for their LOCC construction.  The multiple parties in our case are connected by several noiseless quantum communication channels, whose 
 connectivity is represented by a graph of the graph theory~\cite{RefWorks:152}, described in terms of vertices and edges.
Although the total number of edges does not fully characterize a graph, any connected network of $N$ parties requires at least $N-1$ channels.   If a connected $N$-vertex graph has exactly $N-1$ edges, then it is called a \textit{tree}.  As in the bipartite case, a single use of each channel between two parties is equivalent to a maximally entangled state of fixed dimensional systems consumed under LOCC\@.  Thus the set of the maximally entangled states distributed according to the edges of the graph as illustrated in Fig.~\ref{fig:intro} characterizes the nonlocal resources for constructing a multipartite state.
The total amount of quantum communication is evaluated by the total amount of the maximally entangled states and generalizes the notion of entanglement cost, which we name \textit{graph-associated entanglement cost}.
Our choice of resource states differs from Ref.~\cite{32}, which adopts Greenberger-Horne-Zeilinger (GHZ) states for the resource.   Reference~\cite{45} considers more general quantum channels with noise, but restricts its analysis to two-colorable graph states.

We analyze the graph-associated entanglement cost of multipartite \textit{pure} states on tree networks to achieve exact and approximate state construction.  We prove that the former is given in terms of the Schmidt rank~\cite{RefWorks:157} defined with respect to edges of the tree.  For the latter, we refine the analysis sketched in Ref.~\cite{RefWorks:148} and combine the results of Ref.~\cite{RefWorks:160} to provide the second-order asymptotic analysis, utilizing the exact state construction algorithm.  We observe that the multipartite structure appears not in the first-order but in the second-order coefficients of the quantum communication rate.  Our exact state construction algorithm generalizes the one presented in Refs.~\cite{PRL110503,PRA032311}, which in our terminology assumes a network of a straight line.

The rest of the paper is organized as follows.  In Section~\ref{sec:2}, we define the exact and approximate state construction tasks, and formalize the graph-associated entanglement costs corresponding to the state construction tasks.  We analyze state construction under trees introduced in Section~\ref{sec:tree}.  The exact state construction is analyzed in Section~\ref{sec:3} where an implication for construction of projected entangled pair states (PEPS)~\cite{PEPS} is also presented.  The approximate state construction is analyzed in Section~\ref{sec:4}.  Our conclusion is given in Section~\ref{sec:5}.

\section{\label{sec:2} Graph-associated entanglement cost for multipartite state construction}
\subsection{Definition of construction tasks}
We introduce tasks of \textit{exact construction} and \textit{approximate construction} of a multipartite entangled state shared among $N$ parties from a fixed product state of $N$ parties when quantum communication between parties is restricted to specified parties.   When the direction of quantum communication is not restricted, we use a simple undirected graph $G=(V(G),E(G))$ to represent the restriction on quantum communication.  Each of $N$ vertices $v\in V(G)=\{v_1,v_2,\ldots,v_N\}$ of $G$ represents one of the $N$ parties, and quantum communication is only allowed between the parties $v_i$ and $v_j$ directly connected by an edge $e = \left\{v_i,v_j\right\} \in E(G)$ of $G$, where we identify $\{v_i, v_j\}$ with $\{v_j, v_i\}$.   To construct an arbitrary entangled state shared between $N$ parties,  $G$ is required to be a connected graph. We omit the arguments of $V(G)$ and $E(G)$ to simply write $G=(V,E)$ if obvious.

Sending a quantum state in an $m_e$-dimensional Hilbert space between two parties $v_i$ and $v_j$ represented by edge $e = \left\{v_i,v_j\right\} $ can be achieved by quantum teleportation using, in addition to local operations by each party and classical communication between the parties, a maximally entangled state of two $m_e$-level systems shared between $v_i$ and $v_j$ defined by 
\begin{align*}
    \Ket{\Phi_{m_e}^+}^e\coloneqq\frac{1}{\sqrt{m_e}}\sum_{l=0}^{m_e-1}\Ket{l}^{v_i}\otimes\Ket{l}^{v_j},
\end{align*}
where the superscripts for each ket represent parties to which the state belongs.
To represent parties with the superscripts of kets, we may write numbers representing the vertices for brevity, such as $\Ket{\Phi_{m_e}^+}^{i,j}$ on $v_i$ and $v_j$.
If we do not impose restrictions on the classical communication and the local operations apart from the laws of quantum mechanics, the maximally entangled state is a \textit{resource state} for performing quantum communication by local operations and classical communication (LOCC), while LOCC is considered to be nonrestricted ``free'' resources in this setting.

In the LOCC framework, the tasks of constructing an $N$-partite entangled state under the restriction on quantum communication is equivalent to the tasks of transforming an initial state consisting of a set of bipartite resource states specified by a set of edges $E$ into the target multipartite entangled state to be constructed under LOCC\@.    The initial resource state for a given graph $G$, which is initially shared by the $N$ parties, is represented by 
\begin{align*}
\Ket{\Phi_{\textup{res}} (G)} \coloneqq \bigotimes_{e\in E}\Ket{\Phi_{m_e}^+}^e,
\end{align*} 
where $m_e$ is the dimension of the initial resource state specified by edge $e$.
We write the Hilbert space for the initial resource state $\Ket{\Phi_{\textup{res}} (G)}$ as 
\[
  \mathcal{H}_r \coloneqq \bigotimes_{v\in V}\mathcal{H}_{r}^v,
\]
where, for each $v\in V$, $\mathcal{H}_{r}^v$ denotes the total Hilbert space of the initial resource state belonging to party $v$.
For each edge $e=\{v,v'\}$, the $m_e$-dimensional subsystem of $\mathcal{H}_{r}^v$ for $v$'s part of $\Ket{\Phi_{m_e}^+}^e$ is denoted by $\mathcal{H}_{r_{e}}^v$, or $\mathcal{H}_{r_{\{v,v'\}}}^v$ equivalently, and we write the Hilbert space for the initial resource state shared between $v$ and $v'$ as	
\[
  \mathcal{H}_r^e \coloneqq \mathcal{H}^{v}_{r_e}\otimes\mathcal{H}^{v'}_{r_e} \ni \Ket{\Phi_{m_e}^+}^e,
\]
or $\mathcal{H}_r^{\{v,v'\}}$ equivalently.

We denote an $N$-partite target state, which is to be constructed and shared by the $N$ parties, by a density operator $\rho$, and the Hilbert space for $\rho$ by
\[
  \mathcal{H}_t \coloneqq \bigotimes_{v\in V} \mathcal{H}_t^v,
\]
where $\mathcal{H}_t^v$ represents the $d_v$-dimensional Hilbert space of the target state belonging to party $v$.

A construction task under a graph $G$ for a given target state $\rho$ is to deterministically transform the initial resource state $\Ket{\Phi_{\textup{res}} (G)}$ on $\mathcal{H}_r$ into a target state $\rho$ on $\mathcal{H}_t$ by LOCC\@.
Note that, in the LOCC framework, each party $v$ can add an auxiliary system of arbitrary dimension, represented by a Hilbert space $\mathcal{H}^v_a$, and can perform arbitrary operations on $\mathcal{H}_r^v \otimes \mathcal{H}_a^v$, which can be conditioned by other parties' measurement outcomes obtained by classical communication.

Formally, the exact and approximate state construction tasks are defined in the following.
\begin{definition}[Exact state construction]
The \textit{exact state construction task} under graph $G=(V,E)$ for an $N$-partite target state $\rho$ on $\mathcal{H}_t$ is to deterministically and exactly transform the initial resource state $\Ket{\Phi_{\textup{res}} (G)}\in \mathcal{H}_{r}$ into $\rho$ on $\mathcal{H}_t$ by an LOCC operation $\Gamma_{\textup{LOCC}}$, namely,
\[
\rho=\Gamma_\textup{LOCC}\left(\Ket{\Phi_\textup{res}(G)}\right),
\]
where each of the $N$ vertices of $G$ corresponds to a party in $\Gamma_\textup{LOCC}$.
\end{definition}
\begin{definition}[$(n, \epsilon)$-approximate state construction]
The \textit{$(n, \epsilon)$-approximate state construction task} under graph $G=(V,E)$ for an $N$-partite target state $\rho$ on $\mathcal{H}_t$ is to deterministically transform the initial resource state $\Ket{\Phi_{\textup{res}} (G)}$ to a state $\tilde \rho_n$ which approximates $n$ copies of the target state $\rho^{\otimes n}$ within a preset error threshold $\epsilon >0$ in terms of the trace distance $\left\|\tilde\rho_n - \rho^{\otimes n}\right\| _1 \leq \epsilon$ by an LOCC operation $\Gamma_{\textup{LOCC}}$, namely,
\begin{align*}
\tilde{\rho}_n=\Gamma_{\textup{LOCC}} (\Ket{\Phi_{\textup{res}} (G)} ).
\end{align*}
\end{definition}

Note that, in our definition, for each party $v$, the dimension of $\mathcal{H}^v_r$ for the initial resource state depends on the number of the neighboring parties, and can be different from that of $\mathcal{H}^v_t$ for the target state.  This setting is different from other LOCC convertibility analyses where states are  from the same Hilbert space~\cite{RefWorks:161, PhysRevA.83.022331, arxiv:1606.04418, arxiv:1607.05145}.

For the case of $N=2$, the exact state construction is possible if and only if the Schmidt number of the target state, a generalization of the Schmidt rank of bipartite pure states to mixed states, is equal to or smaller than that of the initial resource state~\cite{RefWorks:157}.    Also, for $N=2$, the approximate state construction task reduces to entanglement dilution in the finite block length regime~\cite{RefWorks:158, RefWorks:159, HierarchyOfInformationQuantities, RefWorks:160}, which is used for defining the entanglement cost.

\subsection{Total graph-associated entanglement cost}

We first define \textit{total} graph-associated entanglement cost based on the exact and approximate construction of multipartite states shared among $N$ parties under restricted quantum communication specified by graph $G=(V,E)$.   Similar to the case of bipartite states, exact (or approximate) total entanglement costs of a state $\rho$ are defined as the ``minimum required amount of entanglement'' of the initial state $\Ket{\Phi_{\textup{res}} (G)}$ so to be exactly (or approximately) transformable to the target state $\rho$ by LOCC\@.  

Since the initial resource state $\Ket{\Phi_{\textup{res}} (G)}$ only consists of a set of bipartite resource states ${\left\{ \Ket{\Phi_{m_e}^+}^e \right\}}_{e\in E}$, we define the total amount of entanglement of  $\Ket{\Phi_{\textup{res}} (G)}$ to be the sum of the amount of bipartite entanglement of each $\Ket{\Phi_{m_e}^+}^e$ and denote the sum by $E_\textup{sum} (\Ket{\Phi_{\textup{res}} (G)})$.   By using the unit of \textit{ebit}, which is the entanglement entropy of the \textit{Bell state} $\Ket{\Phi_{2}^+}$ (a maximally entangled two-qubit state), the amount of entanglement of $\Ket{\Phi_{m_e}^+}^e$ is given by $\log_2 m_e$.    Therefore the total amount of entanglement of $\Ket{\Phi_{\textup{res}} (G)}$ is given by $E_\textup{sum} (\Ket{\Phi_{\textup{res}} (G)}) =\sum_{e \in E} \log_2 m_e$.

For a quantum state $\rho$ on $\mathcal{H}_t$ and a graph $G=(V,E)$, the exact total graph-associated entanglement cost of $\rho$ is defined by
\begin{align*}
E_{\textup{GC}}^G (\rho) \coloneqq \min_{\Ket{\Phi_{\textup{res}} (G)} \in \mathcal{R}_G (\rho)} E_\textup{sum} (\Ket{\Phi_{\textup{res}} (G)}),
\end{align*}
where $\mathcal{R}_G (\rho)$ represents a set of states $\{ \Ket{\Phi_{\textup{res}} (G)} \}$ such that the exact state construction task under $G$ for $\rho$ is achievable. 

Similarly, for a quantum state $\rho$ on $\mathcal{H}_t$ and a graph $G=(V,E)$, the $(n, \epsilon)$-approximate total graph-associated entanglement cost is defined by 
\begin{align*}
E_{\textup{GC}}^{n, \epsilon,G} (\rho) \coloneqq \frac{1}{n} \min_{\Ket{\Phi_{\textup{res}} (G)} \in \mathcal{R}^{n,\epsilon}_G (\rho)} E_\textup{sum} (\Ket{\Phi_{\textup{res}} (G)}),
\end{align*}
where $\mathcal{R}^{\epsilon, n}_G (\rho)$ represents a set of states $\{ \Ket{\Phi_{\textup{res}} (G)} \}$ such that the $(n, \epsilon)$-approximate state construction task under $G$ for $\rho$ is achievable. 

\subsection{Edge graph-associated entanglement cost}

We define the exact and approximate \textit{edge} graph-associated entanglement cost to characterize distributed entanglement properties of multipartite states.  We define the set of optimal initial resource states $\hat{\mathcal{R}}_G (\rho) \subset \mathcal{R}_G (\rho)$ minimizing the exact total graph-associated entanglement cost $E_{\textup{GC}}^G (\rho)$ and assign an index $i$ to represent different configurations of optimal bipartite resource states.   The elements of $\hat{\mathcal{R}}_G (\rho)$ are denoted by
\begin{align}
    \hat{\mathcal{R}}_G (\rho) = {\left\{ \Ket{\hat{\Phi}_{\textup{res}}^i (G,\rho) } \coloneqq \bigotimes_{e \in E} \Ket{\Phi_{m_e(i)}^+}^e\right\}}_i.  
\label{Rdef}
\end{align}
Using $m_e(i)$ defined by Eq.~\eqref{Rdef}, the exact edge graph-associated entanglement cost is defined for each configuration $i$ as follows.
\begin{definition}[Exact edge graph-associated entanglement cost]
    For a graph $G=(V,E)$ and an $N$-partite quantum state $\rho$ on $\mathcal{H}_t$, the \textit{exact edge graph-associated entanglement cost} of $\rho$ with respect to edge $e\in E$ and for a configuration $i$ is defined by
\begin{align*}
E_{\textup{GC},i,e}^G (\rho) \coloneqq \log_2 m_e (i),
\end{align*}
where $m_e (i)$ represents the dimension of the bipartite resource state corresponding to edge $e$ of the optimal initial resource states for the exact state construction task under graph $G$ for $\rho$ with configuration $i$ defined by Eq.~\eqref{Rdef}.
\end{definition}

Similarly, we define a set of optimal initial resource states $\hat{\mathcal{R}}_G^{n, \epsilon} (\rho)$ minimizing the $(n,\epsilon)$-approximate total graph-associated entanglement cost $E_{\textup{GC}}^{n, \epsilon, G} (\rho)$ and we assign an index $i$ to represent different configurations.  The elements of $\hat{\mathcal{R}}_G^{n, \epsilon} (\rho) $ are denoted by 
\begin{align}
    \hat{\mathcal{R}}_G^{n, \epsilon} (\rho) = {\left\{ \Ket{\hat{\Phi}_{\textup{res}}^{i,n,\epsilon} (G,\rho) } \coloneqq \bigotimes_{e \in E} \Ket{\Phi_{m_e(i,n,\epsilon)}^+}^e\right\}}_i.  
\label{Rdef2}
\end{align}
Using $m_e(i,n,\epsilon)$ defined by Eq.~\eqref{Rdef2}, the $(n,\epsilon)$-approximate edge graph-associated entanglement cost is defined for each configuration $i$ as follows.
\begin{definition}[$(n,\epsilon)$-approximate edge graph-associated entanglement cost]
    For a graph $G=(V,E)$ and an $N$-partite quantum state $\rho$ on $\mathcal{H}_t$, the \textit{$(n,\epsilon)$-approximate edge graph-associated entanglement cost} of $\rho$ with respect to edge $e\in E$ and for a configuration $i$  is defined by 
\begin{align*}
E_{\textup{GC},i,e}^{n,\epsilon, G} (\rho) = \frac{1}{n} \log_2 m_e (i,n,\epsilon),
\end{align*}
where $m_e (i,n,\epsilon)$ represents the dimension of the bipartite resource state corresponding to edge $e$ of the optimal initial resource states for the $(n,\epsilon)$-approximate state construction task under graph $G$ for $\rho$ with configuration $i$ defined by Eq.~\eqref{Rdef2}.
\end{definition}

The asymptotic edge graph-associated entanglement cost with respect to edge $e$ is denoted by
\begin{align*}
E_{\textup{GC},i,e}^{\infty, G} (\rho) \coloneqq  
\lim_{\epsilon \rightarrow 0} \lim_{n \rightarrow \infty} \frac{1}{n} \log_2 m_e (i,n,\epsilon).
\end{align*}

By calculating the edge graph-associated entanglement costs for all the edges of a graph $G$,
we can derive the initial resource state from which construction of multipartite states under graph $G$ is achievable with the smallest amount of total entanglement.
By definition, $E_{\textup{GC},i,e}^G (\rho)$ and $E_{\textup{GC},i,e}^{n,\epsilon, G} (\rho)$ satisfy $E_{\textup{GC}}^G (\rho)=\sum_{e \in E} E_{\textup{GC},i,e}^G (\rho)$ and $E_{\textup{GC}}^{n,\epsilon, G} (\rho)  = \sum_{e \in E} E_{\textup{GC},i,e}^{n,\epsilon, G} (\rho)$.

\section{\label{sec:tree}Analysis under trees}
\subsection{Tree resource states}
Calculating the edge graph-associated entanglement costs for an arbitrary graph is difficult since their definitions include optimization of the total graph-associated entanglement costs.   Thus we focus on analyzing construction tasks under a special class of graphs, \textit{trees}, for the initial resource states.   A network represented by a tree describes the situation where all parties are connected by the smallest number of channels.  

Trees are connected graphs which contain no cycle.
Trees with $N$ vertices have $N-1$ edges, which is the minimum to connect all the vertices.
Any connected graphs can be reduced to a tree spanning all the vertices by removing some of the edges.
We let $T=(V,E)$ denote a tree. 
For a tree $T=(V,E)$,  we can designate any vertex as the root of the tree $T$, which we label as $v_1\in V$.
We represent such a rooted tree as $(T,v_1)$.
Without loss of generality, we label the vertices and edges of $T$ so that the other vertices $v_2, v_3,\ldots, v_N$ and the edges $e_1,e_2,\ldots,e_{N-1}$ are in order of a breadth-first search starting from $v_1$~\cite{RefWorks:142}.  In other words, the closer to the root a vertex (or an edge) is, the smaller the number of the label of the vertex (or the edge) is.
We also let $p(v)$ denote $v$'s parent. The \textit{tree resource state} is denoted by $\Ket{\Phi_{\textup{res}}(T)}$.

\begin{figure}
\centering
\includegraphics[width=8cm]{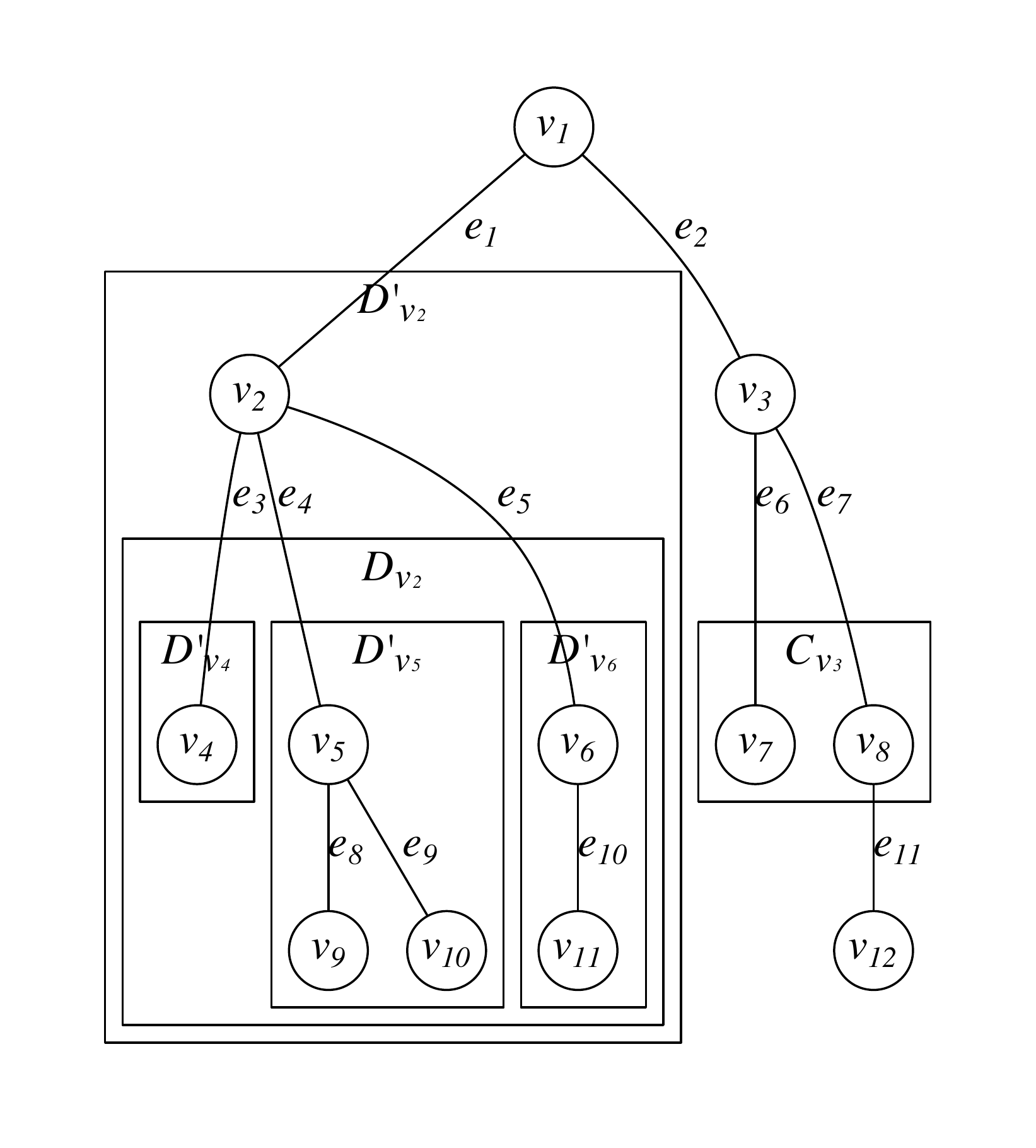}
\caption{A graphical representation of a tree.
    The vertices and edges are labeled where $v_1$ is the root of the tree and the other vertices $v_2, v_3,\ldots, v_N$ and the edges $e_1,e_2,\ldots,e_{N-1}$ are in order of a breadth-first search starting from $v_1$.   
  $C_v$, $D_{v}$ and $D'_v$ denote the set of $v$'s children, the set of $v$'s descendants, and the set of $v$ itself and $v$'s descendants, respectively. $D'_v$ is decomposed into $v$ itself and $D'_c$'s for all $c\in C_v$.
}
\label{fig:tree}
\end{figure}

We introduce a recursive structure according to rooted trees, as illustrated in Fig.~\ref{fig:tree}.
For any $v\in V$ of the rooted tree $(T,v_1)$ where $T=(V,E)$, we let $C_v$ denote the set of $v$'s children, $D_{v}$ the set of $v$'s descendants, and $D'_v$ the set of $v$'s descendants and $v$ itself.
We refer to a vertex without any child as a leaf of the rooted tree.
For any $v\in V$, $D'_v$ can be decomposed by using these notations as
\begin{equation}
    \label{eq:v_each}
    \begin{split}
        D'_v &= \{v\}\cup D_v\\
        &=\{v\}\cup \bigcup_{c\in C_v} D'_c.
    \end{split}
\end{equation}
The set of all vertices $V$ is represented by $V=D'_{v_1}$ for the root specified by $v_1$.  Using Eq.~\eqref{eq:v_each}, we can recursively decompose $V$ according to the given rooted tree.

The Hilbert space $\mathcal{H}_t$ for the target state can also be recursively decomposed.   By denoting the Hilbert space corresponding to the set of vertices $D'_v$ as $\mathcal{H}_t^{D'_v}$,  it is decomposed as 
\begin{equation}
    \label{eq:h_each}
    \begin{split}
        \mathcal{H}_t^{D'_v} &= \mathcal{H}_t^v\otimes\mathcal{H}_t^{D_v}\\
        &=\mathcal{H}_t^{v}\otimes\bigotimes_{c\in C_v}\mathcal{H}_t^{D'_c},
    \end{split}
\end{equation}
for any nonleaf $v$.   Note that, by definition, $\mathcal{H}_t^{D'_{v_1}}=\mathcal{H}_t$.  We make use of the recursive decomposition given by Eq.~\eqref{eq:h_each} of $\mathcal{H}_t$ in our analysis of state construction tasks under trees.

\subsection{Construction of pure states under trees}
We focus on analyzing construction tasks for pure target states, while an extension to mixed states is possible by considering their purification or convex roof~\cite{RefWorks:111}.
A pure target state is denoted by $\Ket{\psi}$, where the corresponding density operator is written by $\psi \coloneqq \Ket{\psi}\Bra{\psi}$.

\begin{figure}
\centering
\includegraphics[width=8cm]{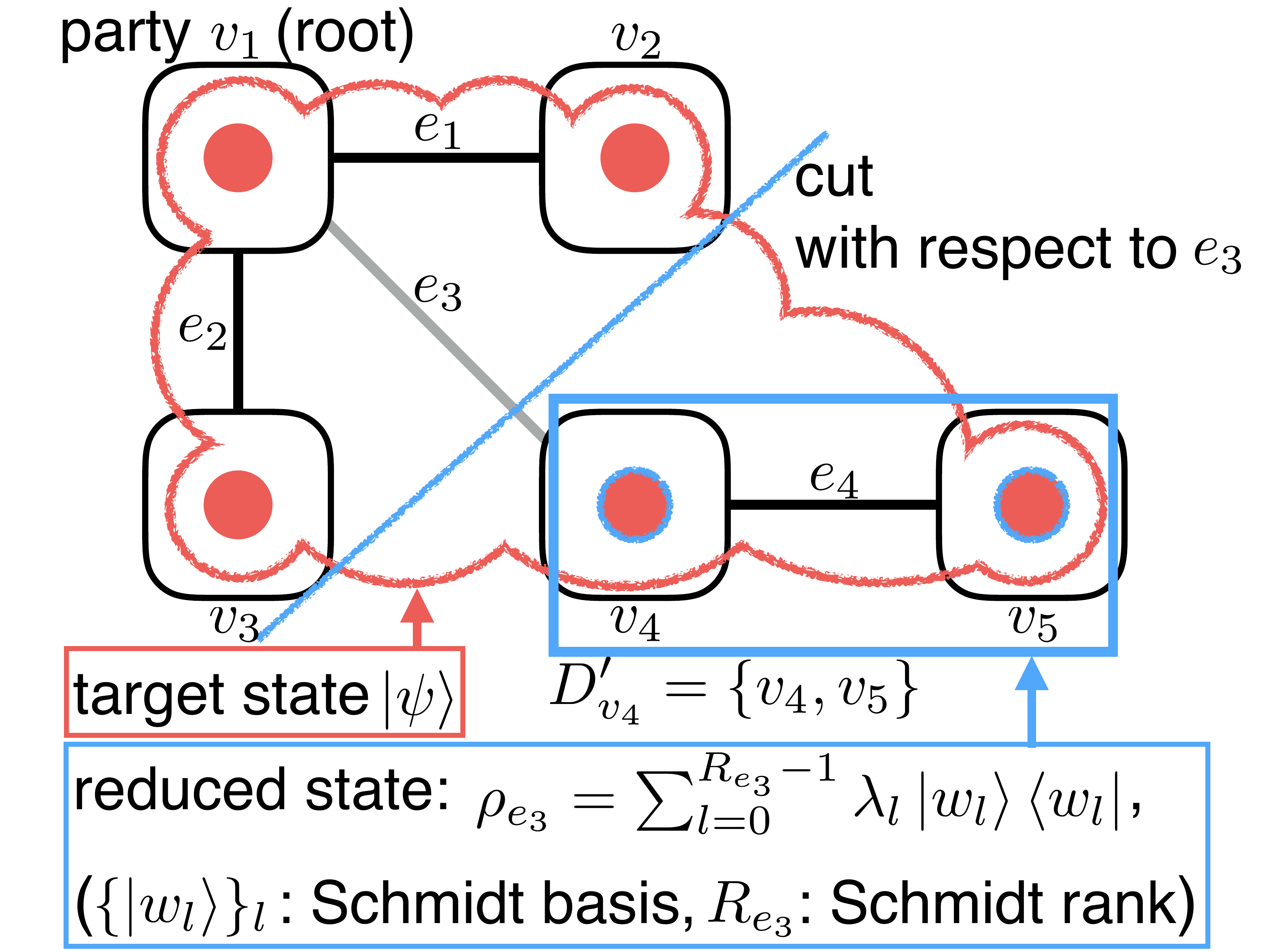}
\caption{(Color online) Schematic representation of a reduced density operator $\rho_e$ of the target state $\Ket\psi$ with respect to an edge $e$ of the rooted tree.
    For each edge $e=\{p(v),v\}$, the tree can be divided into two connected subgraphs corresponding to $D'_v$ and its complement.  We write the reduced density operator of $\Ket\psi$ corresponding to $D_v$ as $\rho_e$.  The Schmidt basis of $\mathcal{H}^{D'_v}_t$ for $\Ket\psi$ is written as $\left\{\Ket{w_l}\right\} _{l}$.  We can represent $\rho_e$ using a subset of $\left\{\Ket{w_l}\right\} _{l}$ corresponding to nonzero Schmidt coefficients.
}
\label{fig:density}
\end{figure}

To analyze edge graph-associated entanglement costs under trees, we define a reduced density operator of $\Ket{\psi}$ with respect to an edge $e\in E$, which we illustrate in Fig.~\ref{fig:density}.
Given a rooted tree $(T,v_1)$ where $T=(V,E)$, when any edge $e=\{p(v),v\}\in E$ is removed, the tree $T$ is divided into two connected subgraphs.   The set of vertices of one of the subgraphs including $v$ is $D'_v$.  We denote the set of vertices of the other subgraph including $p(v)$ by $\overline{D'_v}$, the complement of $D'_v$.    The bipartition of $T$ with respect to $e$ is referred to as the decomposition of the vertex set $V$ into $D'_v$ and $\overline{D'_v}$ induced by removing edge $e$.  A reduced state $\rho_e^{(T,v_1),\psi}$ of $\Ket{\psi}$ with respect to the edge $e=\{p(v),v\}$ is defined as the state on $\mathcal{H}_t^{D'_v}$ obtained by tracing out the systems belonging to $\overline{D'_v}$, namely,
\[
    \rho_e^{(T,v_1),\psi}\coloneqq \tr_{\overline{D'_v}}\psi,
\]
where $\tr_{\overline{D'_v}}$ denotes a partial trace of $\mathcal{H}_t^{\overline{D'_v}}$.
We omit the superscript $(T,v_1),\psi$ and simply write $\rho_e^{(T,v_1),\psi}$ as $\rho_e$ if obvious.   For simplicity, we also define  
\[
    R_e \coloneqq \rank \rho_{e}.
\]

We represent $\rho_e$ in terms of the Schmidt basis of $\Ket\psi$ for the bipartition with respect to $e$.
The Schmidt decomposition of $\Ket\psi$ for the bipartition with respect to $e=\{p(v),v\}$ is written as
\begin{equation}
    \label{eq:schmidt_decomposition}
    \Ket\psi = \sum_{l=0}^{R_e-1}\sqrt{\lambda_{l}}\Ket{w_{l}}\otimes\Ket{\overline{w}_{l}},
\end{equation}
where $\sqrt{\lambda_{l}}$ for each $l$ is a Schmidt coefficient satisfying $\sum_l \lambda_l = 1$, ${\left\{\Ket{w_l}\right\}}_{l}$ and ${\left\{\Ket{\overline{w}_{l}}\right\}}_{l}$ are the Schmidt bases of $\mathcal{H}_t^{D'_v}$ and $\mathcal{H}_t^{\overline{D'_v}}$ respectively.   In the following, we only consider an $R_e$-dimensional subspace spanned by the Schmidt basis states corresponding to nonzero Schmidt coefficients, and simply call the subset of the Schmidt basis states as the Schmidt basis for the bipartition with respect to $e$.  For any edge $e=\{p(v),v\}\in E$, the reduced density operator with respect to $e$ is given by
\[
    \rho_e=\sum_{l=0}^{R_e -1}\lambda_{l} \Ket{w_{l}}\Bra{w_{l}},
\]
where $\lambda_{l}> 0$ for each $l$ and $\sum_{l}\lambda_{l}=1$.

\section{\label{sec:3}Exact state construction under trees}

\subsection{Graph-associated entanglement cost for exact state construction}

For exact construction of bipartite pure states,  the initial resource state  $\Ket{\Phi_R^+}$ shared between two parties A and B can be transformed into any bipartite states having the Schmidt rank no more than $R$ by LOCC\@~\cite{RefWorks:157}.    As for exact construction of multipartite states, we obtain a similar result for tree resource states as follows.
\begin{theorem}[Exact edge graph-associated entanglement cost]
\label{thm:exact}
Given any tree $T=(V,E)$ and any $N$-partite pure target state $\Ket{\psi} \in \mathcal{H}_t$, the exact edge graph-associated entanglement cost for edge $e \in E$ under tree $T$ is given by
\[
    E_{\textup{GC},i,e}^T\left(\psi\right)=\log_2 {R_e}
\]
where $R_e = \rank\rho_e$  and the configuration $i$ of the optimal resource state is uniquely determined.
\end{theorem}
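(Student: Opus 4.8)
The plan is to prove the formula by establishing matching lower and upper bounds on the dimension $m_e$ of the bipartite resource state on each edge $e$, and then to extract both the edge cost and the uniqueness of the optimal configuration from them. For the \emph{lower bound}, fix $e=\{p(v),v\}\in E$ and split the $N$ parties into the two groups $D'_v$ and $\overline{D'_v}$ obtained by deleting $e$. Since $T$ is a tree, $e$ is the unique edge crossing this cut, so in $\ket{\Phi_{\textup{res}}(T)}=\bigotimes_{e'\in E}\ket{\Phi_{m_{e'}}^+}^{e'}$ every factor with $e'\neq e$ lies entirely on one side of the cut, while the factor for $e$ is the only one entangled across it; hence $\ket{\Phi_{\textup{res}}(T)}$ has Schmidt number $m_e$ with respect to this bipartition. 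Any $N$-party LOCC protocol is in particular an LOCC protocol between these two groups, and the Schmidt number is nonincreasing under such LOCC; since the target $\psi$ has Schmidt number $\rank\rho_e=R_e$ across the same cut, exact construction requires $m_e\geq R_e$. This is precisely the $N=2$ criterion of Ref.~\cite{RefWorks:157}, applied cut by cut.

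For the \emph{upper bound} I would exhibit a recursive protocol on the rooted tree $(T,v_1)$ that consumes exactly $\ket{\Phi_{R_e}^+}^e$ on each edge. First $v_1$ prepares $\ket\psi$ locally, using its target space $\mathcal{H}_t^{v_1}$ together with ancillary copies of $\mathcal{H}_t^{v_j}$ for $j\neq 1$; then, in breadth-first order, each vertex $v$ that currently holds the bundle $\mathcal{H}_t^{D'_v}$ treats each child $c\in C_v$ as follows. Using the decomposition \eqref{eq:h_each}, $v$ isolates the sub-bundle $\mathcal{H}_t^{D'_c}$, whose reduced state is $\rho_{\{v,c\}}$ and hence is supported on an $R_{\{v,c\}}$-dimensional subspace; $v$ applies the local isometry compressing that subspace onto an $R_{\{v,c\}}$-level register, teleports the register to $c$ using $\ket{\Phi_{R_{\{v,c\}}}^+}^{\{v,c\}}$, and $c$ applies the inverse isometry. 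Because teleportation deterministically implements the identity channel, the global state is unchanged and $c$ now holds $\mathcal{H}_t^{D'_c}$; iterating down to the leaves leaves every $v$ holding only $\mathcal{H}_t^v$ with joint state $\psi$. Only local operations, classical communication, and the states $\{\ket{\Phi_{R_e}^+}^e\}_{e\in E}$ were used, so $\bigotimes_{e\in E}\ket{\Phi_{R_e}^+}^e\in\mathcal{R}_T(\psi)$.

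Combining the bounds, $E_{\textup{sum}}(\ket{\Phi_{\textup{res}}(T)})=\sum_{e}\log_2 m_e\geq\sum_{e}\log_2 R_e$ for every admissible resource state, with equality achieved by the protocol; since each constraint $m_e\geq R_e$ is separate and $\log_2$ is strictly increasing, the minimizer has $m_e=R_e$ for all $e$, so $\ket{\hat\Phi_{\textup{res}}^i(T,\psi)}=\bigotimes_{e\in E}\ket{\Phi_{R_e}^+}^e$ is the unique optimal resource state and reading off the edge term gives $E_{\textup{GC},i,e}^T(\psi)=\log_2 R_e$. The main obstacle is making the achievability argument airtight: one must verify that, after $v$ compresses and teleports the sub-bundles of \emph{all} its children, the residual joint state shared by $v$, the other parts, and each $c$ is a local-isometry image of $\ket\psi$, so that when $c$ becomes the active vertex its marginal on $\mathcal{H}_t^{D'_c}$ is exactly $\rho_{\{v,c\}}$ and the recursion continues with the correct dimensions; this reduces to bookkeeping with the nested decomposition \eqref{eq:h_each}, the compatibility of the marginals $\rho_e$ along a root-to-leaf path, and the fact that teleportations on disjoint edges commute and are deterministic. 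The lower bound, by contrast, is immediate once the single-crossing-edge structure of a tree is noted.
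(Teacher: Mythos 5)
Your proof is correct, but your achievability argument takes a genuinely different route from the paper's. The lower bound is identical: a tree has a single edge crossing the cut induced by removing $e$, so the resource state has Schmidt rank $m_e$ across that cut, and monotonicity of the Schmidt rank under LOCC forces $m_e\geq R_e$. For the upper bound, however, the paper does \emph{not} have the root prepare $\Ket{\psi}$ and forward compressed bundles down the tree. Instead it builds a distributed measurement-based protocol: using the recursive Schmidt decomposition of Proposition~\ref{lem:decomposition_tree}, each nonleaf party performs a measurement $\bigl\{M^v_{\boldsymbol{x}_{C_v},\boldsymbol{z}_{C_v}}\bigr\}$ directly on its share of the resource state (Eqs.~\eqref{eq:POVM_root} and~\eqref{eq:POVM_each}), children apply generalized Pauli corrections conditioned on the outcomes, and leaves apply a final basis-adjusting isometry. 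Your compress--teleport--decompress scheme is logically sound and arguably easier to verify (teleportation is the identity channel, so the marginal on $\mathcal{H}_t^{D'_c}$ seen by each child is exactly $\rho_{\{v,c\}}$ with rank $R_{\{v,c\}}$, and the recursion closes by the nested decomposition~\eqref{eq:h_each}); since the theorem only counts consumed ebits and LOCC permits arbitrary ancillas, it fully establishes achievability and hence the theorem, including the uniqueness of the optimal configuration from the independent constraints $m_e\geq R_e$. What the paper's construction buys beyond yours is practical and structural: it never requires any party to hold the entire $N$-partite state (your root needs local memory of dimension $\prod_v d_v$, whereas the paper's measurements can be realized with a single auxiliary qubit per party), and its measurement operators are expressed directly in terms of the tensors of the canonical tree-tensor-network form, which underlies the paper's subsequent PEPS/MPS discussion. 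One small point of rigor in your write-up: the ``compression'' onto an $R_{\{v,c\}}$-level register is a partial isometry defined only on the support of $\rho_{\{v,c\}}$, not an isometry of all of $\mathcal{H}_t^{D'_c}$; this is harmless since the state never leaves that support, but it should be stated as such.
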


The lower bound of $E_{\textup{GC},i,e}^T\left(\psi\right)$ immediately follows from a bipartite entanglement property of $\ket{\psi}$ in terms of the bipartite cut induced by removing $e$ characterized by the rank of $\rho_e$, which is monotonically nonincreasing under LOCC\@.   In contrast, the upper bound is nontrivial, as the upper bound is not deduced from the bipartite case.   To show the upper bound coincides with the lower bound,  we present a distributed algorithm for exact state construction in which, for any tree $T=(V,E)$, any target state $\Ket{\psi}$ is exactly constructed from the tree resource state $\ket{\Phi_{\textup{res}}(T)}=\bigotimes_{e\in E}\Ket{\Phi^+_{R_e}}^e$ with the smallest amount of entanglement at each edge which allows the construction.  The state construction algorithm saves the maximum quantum memory space required for parties, as each party directly transforms the optimal initial resource state into its reduced target state by performing measurements, which can be performed with one auxiliary qubit at each party by employing a method introduced in Ref.~\cite{PhysRevA.77.052104}.

The state construction algorithm is based on a recursive description of $\Ket{\psi}$ according to a tree.  We recursively represent $\Ket{\psi} \in \mathcal{H}_t$ in terms of the Schmidt basis ${\left\{\Ket{w_{l_v}}\right\}}_{l_v}$ of the Schmidt decomposition with respect to edge $e=\{p(v),v\} $ for each nonroot vertex $v \neq v_1$.
We write the computational basis of $\mathcal{H}_t^v$ as ${\{\Ket{l}\}}_{l}$ for $v\in V$.  Based on the recursive decomposition of $\mathcal{H}_t$ given by Eq.~\eqref{eq:h_each}, we represent $\Ket{\psi}$ as shown in the following proposition.
\begin{proposition}[Recursive description of multipartite states according to trees]
\label{lem:decomposition_tree}
For any rooted tree $(T,v_1)$ where $T=(V, E)$ and any $N$-partite pure state $\Ket{\psi} \in \mathcal{H}_t$, $\Ket{\psi}$ is decomposed by recursively applying the following procedure.
\begin{enumerate}
  \item For the root $v_1$, $\Ket{\psi}$ is decomposed in terms of the computational basis ${\{ \Ket{l} \}}_l$ of $\mathcal{H}_t^{v_1}$ and the Schmidt basis ${\{ \Ket{w_{l_c}} \}}_{l_c}$ of $\mathcal{H}_t^{D'_c}$ for all $c \in C_{v_1}$ as
\begin{equation}
    \begin{split}
        \Ket{\psi} = \sum_{l} \alpha^{v_{1}}_l \Ket{l}  \otimes \sum_{{\boldsymbol{l}_{C_{v_1}}}} \beta_{l,\boldsymbol{l}_{C_{v_1}}}^{v_1}  \bigotimes_{c\in C_{v_1}}\Ket{w_{l_c}},
    \end{split}
    \label{eq:psi_decomposition}
\end{equation}
where $l \in \{ 0, 1, \ldots, d_{v_1}-1 \}$ and 
$\boldsymbol{l}_{C_{v_1}} \coloneqq {(l_c)}_{c\in C_{v_1}}$ denotes a tuple of $l_c \in \{ 0, 1, \ldots , R_{\{v_1, c \} }-1 \}$.

\item For any nonleaf $v$, each Schmidt basis $\Ket{w_{l_v}} \in \mathcal{H}_t^{D'_v}$ in Eq.~\eqref{eq:psi_decomposition} is decomposed in terms of the computational basis ${\{ \Ket{l} \}}_l$ of $\mathcal{H}_t^{v}$ and the Schmidt basis states ${\{ \Ket{w_{l_c}} \}}_{l_c}$ of $\mathcal{H}_t^{D'_c}$ for all $c \in C_{v}$ as
\begin{equation}
    \begin{split}
        \Ket{w_{l_v}} =& \sum_{l}\alpha^{v}_{l, l_v} \Ket{l}  \otimes 
        \sum_{\boldsymbol{l}_{C_v}}
        \beta_{l,\boldsymbol{l}_{C_v},l_v}^{v}  \bigotimes_{c\in C_v}\Ket{w_{l_c}},
    \end{split}
    \label{eq:w_decomposition}
\end{equation}
where $l \in \{ 0, 1, \ldots, d_{v}-1 \}$ and $\boldsymbol{l}_{C_{v}} \coloneqq {(l_c)}_{c\in C_{v}}$ denotes a tuple of $l_c \in \{ 0, 1, \ldots , R_{\{v, c \}}-1 \}$.
\end{enumerate}
\end{proposition}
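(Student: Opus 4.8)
The plan is to proceed by induction on the depth of the rooted tree $(T,v_1)$, using the recursive decomposition of the Hilbert space $\mathcal{H}_t$ given in Eq.~\eqref{eq:h_each}. The two statements to establish are of the same shape: in both cases a state living on $\mathcal{H}_t^{D'_v}$ (where $v=v_1$ in item~1 and $v$ is an arbitrary nonleaf in item~2) is to be written as a sum over the computational basis of $\mathcal{H}_t^v$ tensored with products of Schmidt bases of the child subspaces $\mathcal{H}_t^{D'_c}$, $c\in C_v$. So the core step is a single structural lemma: for any state $\Ket{\chi}\in\mathcal{H}_t^{D'_v}=\mathcal{H}_t^v\otimes\bigotimes_{c\in C_v}\mathcal{H}_t^{D'_c}$, expanding $\Ket\chi$ in the computational basis $\{\Ket{l}\}$ of $\mathcal{H}_t^v$ and then, for each $c\in C_v$, in the Schmidt basis $\{\Ket{w_{l_c}}\}$ of $\mathcal{H}_t^{D'_c}$ (obtained from the Schmidt decomposition of $\Ket\psi$ across edge $e=\{v,c\}$, per Eq.~\eqref{eq:schmidt_decomposition}) yields an expansion of exactly the claimed form, with the subtlety that the computational-basis amplitude can be factored so the $l$-dependence separates out as $\alpha^v_l$ and the joint correlation among $l$ and the child indices is carried by $\beta^v_{l,\boldsymbol{l}_{C_v}}$.

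First I would handle the base case, item~1, at the root $v_1$: apply the Schmidt decomposition of $\Ket\psi$ with respect to each edge $e=\{v_1,c\}$, $c\in C_{v_1}$, noting that the Schmidt bases $\{\Ket{w_{l_c}}\}$ of the distinct child subspaces $\mathcal{H}_t^{D'_c}$ act on tensor-disjoint systems and hence their products $\bigotimes_{c}\Ket{w_{l_c}}$ form an orthonormal set in $\mathcal{H}_t^{D_{v_1}}$. Together with $\{\Ket{l}\}$ on $\mathcal{H}_t^{v_1}$ this gives an orthonormal product basis of $\mathcal{H}_t^{D'_{v_1}}=\mathcal{H}_t$, so $\Ket\psi$ has a unique expansion $\sum_{l,\boldsymbol{l}_{C_{v_1}}}\gamma_{l,\boldsymbol{l}_{C_{v_1}}}\Ket{l}\otimes\bigotimes_c\Ket{w_{l_c}}$; writing $\gamma_{l,\boldsymbol{l}_{C_{v_1}}}=\alpha^{v_1}_l\,\beta^{v_1}_{l,\boldsymbol{l}_{C_{v_1}}}$ (e.g.\ with $\alpha^{v_1}_l$ the norm of the $l$-slice and $\beta^{v_1}_{l,\boldsymbol{l}_{C_{v_1}}}$ the normalized remainder) recovers Eq.~\eqref{eq:psi_decomposition}. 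The inductive step, item~2, is the identical argument applied not to $\Ket\psi$ but to each Schmidt vector $\Ket{w_{l_v}}\in\mathcal{H}_t^{D'_v}$ that appeared at the parent's level; since $\Ket{w_{l_v}}$ is genuinely a vector on $\mathcal{H}_t^{D'_v}$, the same product-basis expansion applies verbatim, producing Eq.~\eqref{eq:w_decomposition}. Iterating down the breadth-first order terminates at the leaves, where $C_v=\emptyset$ and the child-tensor-product is empty, so no further decomposition is needed.

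The step I expect to require the most care is not any single identity but making precise that the procedure is genuinely well-defined and self-consistent as one recurses: the Schmidt basis $\{\Ket{w_{l_c}}\}$ of $\mathcal{H}_t^{D'_c}$ used when decomposing at level $v$ is the \emph{same} object that is then itself decomposed at level $c$ in the next round, and one must check that the two roles are compatible — i.e.\ that the Schmidt decomposition across edge $\{v,c\}$ only ever references the global state $\Ket\psi$ (which it does, via $\rho_{\{v,c\}}=\tr_{\overline{D'_c}}\psi$) and never a partially-processed state, so there is no circularity. A minor related point is the freedom in the $\alpha/\beta$ split (it is only unique up to phases on the zero-amplitude slices), which is harmless since the proposition only asserts existence of such a decomposition; I would remark that any convention fixing $\alpha^v_{l}\ge 0$ makes the coefficients canonical. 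Everything else — orthonormality of tensor products of bases on disjoint subsystems, uniqueness of coordinates in an orthonormal basis — is standard linear algebra and can be stated without computation.
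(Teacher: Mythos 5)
Your proposal has a genuine gap at its central step, and the step you flag as ``requiring the most care'' is not the one that actually carries the mathematical content. You assert that the products $\Ket{l}\otimes\bigotimes_{c\in C_v}\Ket{w_{l_c}}$ form ``an orthonormal product basis'' of $\mathcal{H}_t^{D'_v}$, from which the expansion follows by uniqueness of coordinates. But the Schmidt basis $\{\Ket{w_{l_c}}\}_{l_c}$ is, by the paper's convention, truncated to the $R_{\{v,c\}}$ vectors with nonzero Schmidt coefficients, i.e.\ it spans only $\operatorname{supp}(\rho_{\{v,c\}})$, which is in general a proper subspace of $\mathcal{H}_t^{D'_c}$. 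Hence the set you describe is an orthonormal set but not a basis, and the existence of an expansion of $\Ket{\psi}$ (respectively of each $\Ket{w_{l_v}}$) with $l_c$ ranging only over $\{0,\ldots,R_{\{v,c\}}-1\}$ is exactly the nontrivial claim of the proposition --- it is what makes the resource count $\log_2 R_e$ in Theorem~\ref{thm:exact} work. If instead you intend the full, untruncated Schmidt bases, the expansion is trivial but does not prove the proposition as stated.

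The missing argument is the one the paper supplies: for any $\Ket{w_c^\perp}\in\mathcal{H}_t^{D'_c}$ orthogonal to all $\Ket{w_{l_c}}$ one has $\langle w_c^\perp|\left(\tr_{\overline{D'_c}}\psi\right)|w_c^\perp\rangle=0$ and hence $\left(\Bra{w_c^\perp}\otimes\openone\right)\Ket{\psi}=0$; inserting $\Ket{w_{l_v}}=\sum_l\alpha^v_{l,l_v}\Ket{l}\otimes\Ket{q_{l,l_v}}$ into the Schmidt decomposition across $\{p(v),v\}$ and using the linear independence of $\{\Ket{l}\otimes\Ket{\overline{w}_{l_v}}\}$ together with $\lambda_{l_v}>0$ then forces $\left(\Bra{w_c^\perp}\otimes\openone\right)\Ket{q_{l,l_v}}=0$ whenever $\alpha^v_{l,l_v}\neq 0$. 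This last step is where the strict positivity of the parent edge's Schmidt coefficients enters; it is needed because the child Schmidt bases are defined from the global state $\Ket{\psi}$ while the object decomposed in item~2 is the single vector $\Ket{w_{l_v}}$, so the support condition must be propagated from $\rho_{\{v,c\}}=\sum_{l_v}\lambda_{l_v}\tr_{D'_v\setminus D'_c}\Ket{w_{l_v}}\Bra{w_{l_v}}$ down to each individual $\Ket{w_{l_v}}$. Your remarks on the $\alpha/\beta$ factorization and on the non-circularity of the recursion are correct but peripheral; with the support argument added, your induction scheme would go through and would essentially reproduce the paper's proof.
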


We first describe an overview of the state construction algorithm.  The algorithm consists of the following three elements.
\begin{enumerate}
\item Sequential measurements of the initial resource state $\ket{\Phi_{\textup{res}}(T)}$ performed by the parties except at leaves $v_{\textup{leaf}}$ of the rooted tree $(T,v_1)$ in the descending order starting from the root $v_1$.
\item Stepwise corrections depending on the measurement outcome of each parent $p(v)$ applied before performing their own measurements by the parties $v \neq v_1$ to compensate randomness induced by parent's measurement.
\item Final isometry transformations performed by the parties at leaves $v_{\textup{leaf}}$ to adjust bases from the computational basis to the Schmidt basis of the target state $\ket{\psi}$.
\end{enumerate}

\begin{figure}
\centering
\includegraphics[width=8cm]{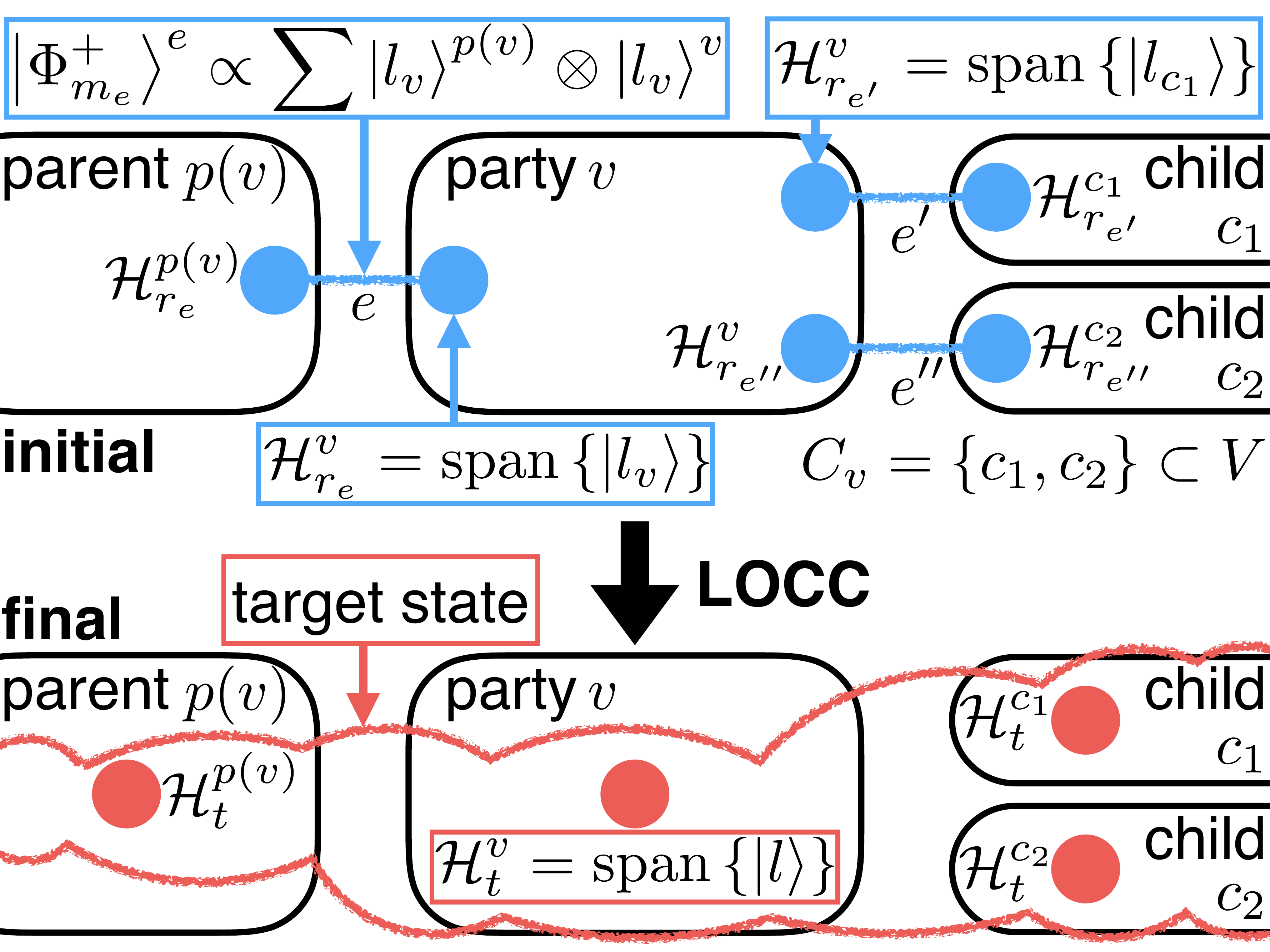}
\caption{(Color online) Schematic representation for the notations of the Hilbert spaces of the initial resource states and the target states.
    For each nonroot $v$ ($v\neq v_1$), an initial resource state shared between $v$ and its parent $p(v)$ is denoted by $\Ket{\Phi^+_{m_e}}^e=\left(1/\sqrt{m_e}\right)\sum_{l_v}\Ket{l_v}^{p(v)}\otimes\Ket{l_v}^v$, where $e=\{p(v),v\}$, $\Ket{l_v}^{p(v)}\in\mathcal{H}^{p(v)}_{r_e}$, and $\Ket{l_v}^v\in\mathcal{H}^v_{r_e}$.
    For each $v$, the computational basis of $\mathcal{H}^v_t$ for the $v$'s part of the target state is denoted by ${\{\Ket{l}\}}_l$.
}
\label{fig:system}
\end{figure}

In the following, we provide a general description of the above three elements in our algorithm.
Note that, as we will see through examples in the next subsection, the measurements and corrections in our algorithm are equivalent to performing sequential quantum teleportation starting from the root by using a part of the initial resource state, combined with each party's unitary transformation on another part of the initial resource system, an auxiliary system and a target system.
To distinguish computational basis states of different Hilbert spaces of $\ket{\Phi_{\textup{res}}(T)}$, we denote the computational basis of $\mathcal{H}^v_{r_e}$ where $e=\{p(v),v\}$ by ${\{\Ket{l_v}\}}_{l_v}$.     The initial resource state on $\mathcal{H}^e_r=\mathcal{H}^{p(v)}_{r_e}\otimes\mathcal{H}^v_{r_e}$ shared over an edge $e=\{p(v),v\}$ is written as
\[
    \Ket{\Phi^+_{R_e}}^e=\frac{1}{\sqrt{R_e}}\sum_{l_v=0}^{R_e-1}\Ket{l_v}^{p(v)}\otimes\Ket{l_v}^v,
\]
where $\Ket{l_v}^{p(v)}\in \mathcal{H}^{p(v)}_{r_e}$ and $\Ket{l_v}^v\in\mathcal{H}^v_{r_e}$.
Notations of the Hilbert spaces of the initial resource states and the target state are summarized in Fig.~\ref{fig:system}.

The measurement of each party at nonleaf $v \neq v_{\textup{leaf}}$ are chosen to transform $\ket{\Phi_{\textup{res}}(T)}$ into the reduced target state represented by the recursive description of $\Ket{\psi}$ given by Proposition~\ref{lem:decomposition_tree}.  The measurement of $v$ is represented by a set of measurement operators
\[
     {\left\{M^v_{\boldsymbol{x}_{C_v},\boldsymbol{z}_{C_v}}\right\}}_{\boldsymbol{x}_{C_v},\boldsymbol{z}_{C_v}}
\]
on $\mathcal{H}_r^{v}$  where $\boldsymbol{x}_{C_v} \coloneqq{(x_c)}_{c\in C_v}$ and $\boldsymbol{z}_{C_v} \coloneqq{(z_c)}_{c\in C_v}$ are tuples of the labels of the measurement outcomes where $x_c, z_c \in \{ 0,1,\ldots,R_{\{v,c\}}-1\}$ for each $c\in C_v$. 

For the root $v_1$, $M^{v_1}_{\boldsymbol{x}_{C_{v_1}},\boldsymbol{z}_{C_{v_1}}}$ is given by
\begin{equation}
     \label{eq:POVM_root}
     \begin{split}
     M^{v_1}_{\boldsymbol{x}_{C_{v_1}},\boldsymbol{z}_{C_{v_1}}}  =&\sum_{l} \alpha^{v_1}_l\Ket{l}\left(\sum_{\boldsymbol{l}_{C_{v_1}}}\beta_{l,\boldsymbol{l}_{C_{v_1}}}^{v_1}\right.\\
         &\left.\bigotimes_{c\in C_{v_1}}
     \left(\Bra{l_c}Z(z_c)X(x_c)/\sqrt{R_{\{v_1,c\}}}\right)
                 \right),
     \end{split}
\end{equation}
where $\alpha^{v_1}_l$ and $\beta_{l,\boldsymbol{l}_{C_{v_1}}}^{v_1}$ are the coefficients of the target state $\Ket{\psi}$ given by Eq.~\eqref{eq:psi_decomposition}.  $X(x_c)$ and $Z(z_c)$ in Eq.~\eqref{eq:POVM_root} are the generalized Pauli operators (also called Heisenberg-Weyl operators)~\cite{Wilde} transforming a computational basis state $\Ket{l_c}$ as 
\begin{equation}
    X(x_c) \Ket{l_c}=\Ket{(l_c+x_c)\mod R_e}
    \label{eq:HWopx}
\end{equation}
and
\begin{equation}
    Z(z_c) \Ket{l_c}=\exp \left [2\pi i  z_c l_c / R_e \right] \Ket{l_c},
 \label{eq:HWopz}
\end{equation}
where  $e=\{p(c),c\}$, respectively.  Note that the dimensions of these operators depend on the rank $R_e$.

For party $v$ where $v \neq v_1$ and $v \neq v_{\textup{leaf}}$, each $M^v_{\boldsymbol{x}_{C_v},\boldsymbol{z}_{C_v}}$ is given by
\begin{equation}
    \label{eq:POVM_each}
    \begin{split}
        M^v_{\boldsymbol{x}_{C_v},\boldsymbol{z}_{C_v}}  =&\sum_{l_v}\sum_{l} \alpha^{v}_{l,l_v}\Ket{l}\left(\Bra{l_v}\otimes\sum_{\boldsymbol{l}_{C_v}}\beta_{l,\boldsymbol{l}_{C_v},l_v}^{v}\right.\\
        &\left.\bigotimes_{c\in C_v}
        \left(\Bra{l_c}Z(z_c)X(x_c)/\sqrt{R_{\{v,c\}}}\right)
        \right),
\end{split}
\end{equation}
where $l_v \in \{0,1,\ldots,R_{\{p(v),v\}}-1 \}$, $\alpha^{v}_l$ and $\beta_{l,\boldsymbol{l}_{C_{v}}}^{v}$ are given by Eq.~\eqref{eq:w_decomposition}, other notations are same as the ones in Eq.~\eqref{eq:POVM_root}.  The completeness of each measurement is proven in Appendix~\ref{app:a}.

After performing the measurement, each party $v$ sends to each child $c\in C_v$ the measurement outcome $(x_c, z_c)$ by classical communication.   The child $c$ applies a stepwise generalized Pauli correction $Z(-z_c)X(x_c)$ on $\mathcal{H}^c_{r_{\{v,c\}}}$ depending on $(x_c, z_c)$ to remove the extra random transformation ${\left(Z(z_c)X(x_c)\right)}^T$, where $T$ represents transposition, induced by parent's measurement.   And if not at a leaf, the child $c$ recursively performs the measurement and then send the measurement outcome by classical communication to its children.  

At each leaf $v_{\textup{leaf}}$, after performing its generalized Pauli correction, the party performs the isometry transformation $U_{v_{\textup{leaf}}}$ transforming the computational basis ${\{ \ket{l_{v_{\textup{leaf}}}} \}}_{l_{v_{\textup{leaf}}}}$ of $\mathcal{H}^{v_{\textup{leaf}}}_r$ into the Schmidt basis ${\{ \ket{w_{l_{v_{\textup{leaf}}}}} \}}_{l_{v_{\textup{leaf}}}}$ of $\mathcal{H}^{v_{\textup{leaf}}}_t$ to adjust the bases at the leaves, which completes distributed construction of $\Ket{\psi}$ represented by Eqs.~\eqref{eq:psi_decomposition} and~\eqref{eq:w_decomposition}.

The step by step description of the distributed algorithm for exact state construction is presented in the proof of Theorem~\ref{thm:exact} as Algorithm~\ref{alg:1} for the party at $v_1$, Algorithm~\ref{alg:2} for the parties $v \neq v_1, v_{\textup{leaf}}$, and Algorithm~\ref{alg:3} for the parties $v_{\textup{leaf}}$ in Appendix~\ref{app:a}.

\begin{figure}
\centering
\includegraphics[width=8cm]{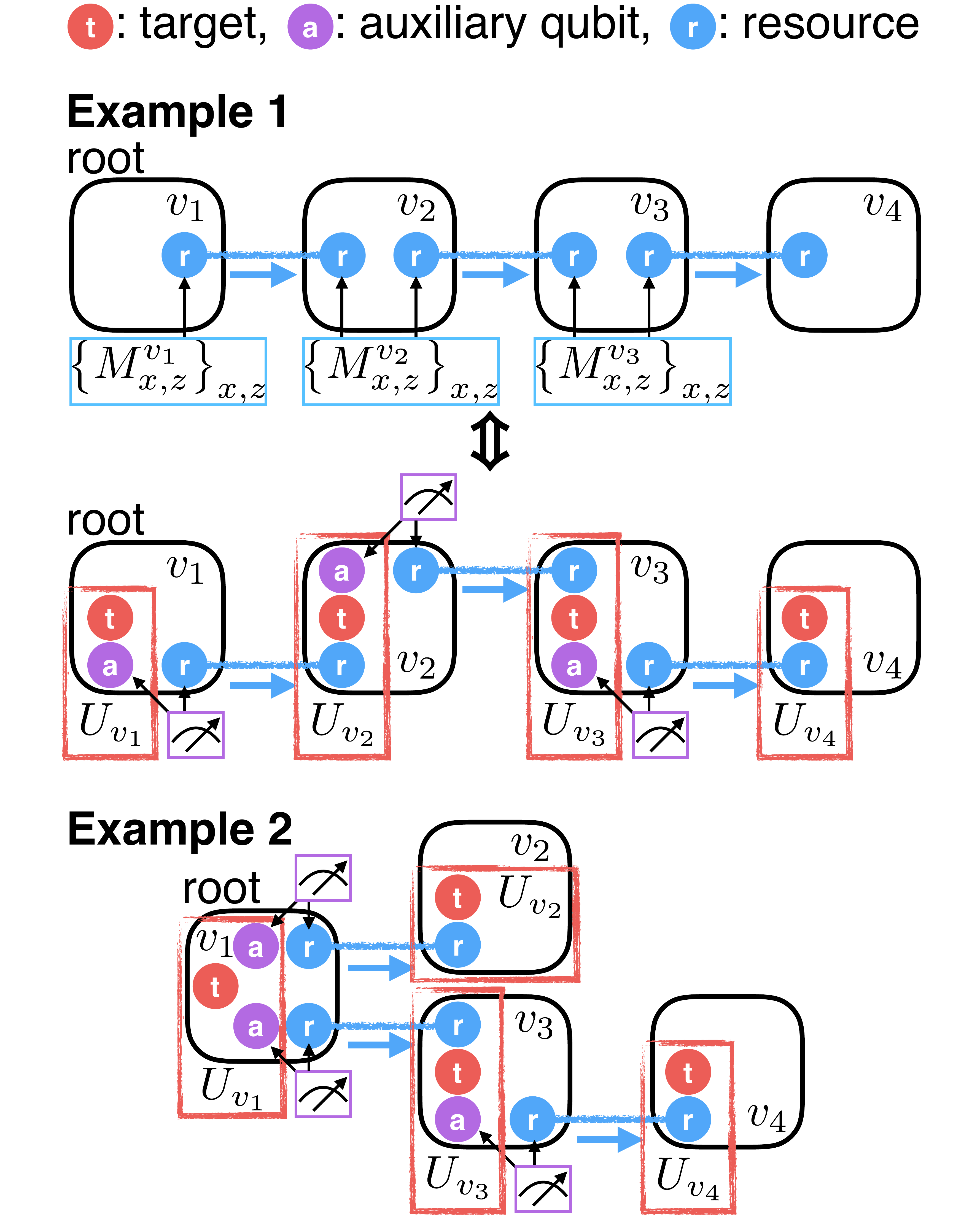}
\caption{(Color online) Schematic description of examples for exact state construction of the four-qubit $W$ state $\Ket{W_4}$ over four parties connected by a line-topology graph.  Theorem~\ref{thm:exact} implies that one Bell state $\ket{\Phi^+_2}$ (a maximally entangled two-qubit state) is required at each edge for the initial resource state in these examples.  The measurement for a party $v$ on an initial resource qubit (represented as ${\left\{M^v_{x,z}\right\}}_{x,z}$ in a blue rectangle) can be implemented by a unitary transformation $U_v$ (represented as a red rectangle) on the incoming initial resource qubit, an auxiliary qubit and a target qubit followed by the Bell measurement (represented as a purple rectangle) on the auxiliary qubit and the outgoing initial resource qubit for quantum teleportation.
Example~\ref{ex:1} is a case in which each party has at most one child, but the child may have descendants.
Example~\ref{ex:2} describes a case in which a party $v_1$ has multiple children $v_2$ and $v_3$.}
\label{fig:example}
\end{figure}

\subsection{Examples of exact state construction for four parties}
To demonstrate how our algorithm works, we provide examples of exact state construction.
We consider a line-topology graph connecting four parties,
and demonstrate exact construction of the four-qubit $W$ states 
\[
    \Ket{W_4}\coloneqq\frac{1}{2}\left(\Ket{1000}+\Ket{0100}+\Ket{0010}+\Ket{0001}\right)
\]
over the graph for two different choices of the root.  For construction of the $W$ states,  Theorem~\ref{thm:exact} implies that one Bell state $\ket{\Phi^+_2}^e$ (a maximally entangled two-qubit state) is required at each edge $e$, i.e.,\
\[
  E_{\textup{GC},i,e}^T(W_4) = 1
\]
for any tree $T$.   Thus, the required initial resource state is independent of the choice of the root of the graph, while the algorithm depends on the choice.  The examples corresponding to the two choices of the root represented in Fig.~\ref{fig:example} illuminate two essential ingredients in our algorithm for the multipartite state construction: one is when a child of a vertex has descendants, and the other is when a vertex has multiple children.

In Eqs.~\eqref{eq:POVM_root} and~\eqref{eq:POVM_each}, the measurement for a party $v$ in our algorithm is represented as a single measurement operator $M^v_{\boldsymbol{x}_{C_v},\boldsymbol{z}_{C_v}}$ on the \textit{whole} initial resource qubits at $v$ given by $\mathcal{H}_r^v = \bigotimes_{e} \mathcal{H}_{r_e}^v$.
By introducing an auxiliary system $\mathcal{H}_a^v$ and an target system $\mathcal{H}_t^v$ in fixed states at each party $v$,
$M^v_{\boldsymbol{x}_{C_v},\boldsymbol{z}_{C_v}}$ can be implemented by a unitary transformation $U_v$ on 
the auxiliary qubit, the target qubit and a part of the initial resource qubit at $v$ given by $\mathcal{H}_{r_{\{p(v),v\}}}^v$ referred to as an \textit{incoming} initial resource qubit, followed by a Bell measurement for quantum teleportation on the auxiliary qubit and another part of the initial resource qubits at $v$ given by $\bigotimes_{c\in C_v}\mathcal{H}_{r_{\{ v, c\} }}^v$ referred to as \textit{outgoing} initial resource qubits, as illustrated in Example 1 of Fig.~\ref{fig:example}.

We first describe an example in which each vertex has at most one child but the child may have descendants.
\begin{example}[\label{ex:1}Exact state construction with a child having descendants]
Consider exact construction of the four-qubit $W$ state $\Ket{W_4}$ over four parties represented by $V = \{v_1, v_2, v_3, v_4\}$.
The parties are connected by a line-topology graph $T=(V, E)$ illustrated at the top of Fig.~\ref{fig:example}, where $E=\left\{\left\{v_i, v_{i+1}\right\}:i=1,2,3\right\}$.
We choose $v_1$ as the root so that each vertex has at most one child.

First, we show the operation for $v_1$.
Using Proposition~\ref{lem:decomposition_tree}, we can decompose the target state $\Ket{W_4}$ as
\begin{align*}
  &\Ket{W_4}_t^{1,2,3,4}\\
  &\propto\Ket{1}_t^1\Ket{000}_t^{2,3,4} +\Ket{0}_t^{1}\left[\Ket{100}_t^{2,3,4}+\Ket{010}_t^{2,3,4}+\Ket{001}_t^{2,3,4}\right]\\
  &\in\mathcal{H}_t^{v_1}\otimes\mathcal{H}_t^{D'_{v_2}},
\end{align*}
where $D'_{v_2}=\{v_2,v_3,v_4\}$, the superscripts for each ket represent parties to which the ket belongs, and the subscripts are to identify targets, resources and auxiliary qubits.
Schmidt basis states $\Ket{000}_t^{2,3,4}$ and  $\Ket{100}_t^{2,3,4}+\Ket{010}_t^{2,3,4}+\Ket{001}_t^{2,3,4}$ can be encoded in an auxiliary qubit $\mathcal{H}_a^{v_1}$ on $v_1$ as $\Ket{0}_a^1$ and $\Ket{1}_a^1$, respectively.
Therefore, $v_1$ prepares $\Ket{0}_t^1\Ket{0}_a^1+\Ket{1}_t^1\Ket{1}_a^1$ by a unitary transformation $U_{v_1}$ from a fixed state $\ket{0}_t^1  \ket{0}_a^1$ and sends the auxiliary qubit state in $\mathcal{H}_a^{v_1}$ to $v_2$ by quantum teleportation using the Bell state $\ket{\Phi_2^+}_{r_{\{v_1,v_2\}}}^{1,2}$ consisting of the initial resource state. The state preparation followed by quantum teleportation is equivalent to the measurement of $v_1$ given by Eq.~\eqref{eq:POVM_root} on the outgoing initial resource qubit $\mathcal{H}_{r_{\{v_1,v_2\}}}^{v_1}$  followed by the correction on the incoming initial resource qubit $\mathcal{H}_{r_{\{v_1,v_2\}}}^{v_2}$  at $v_2$.  There is no incoming initial resource qubit at the root.

Next, we show the operation for $v_2$.
Using Proposition~\ref{lem:decomposition_tree} recursively, we obtain
\begin{align*}
&\Ket{W_4}_t^{1,2,3,4}\\
&\propto\Ket{1}_t^{1}\Ket{000}_t^{2,3,4} +\Ket{0}_t^{1}\left[\Ket{100}_t^{2,3,4}+\Ket{010}_t^{2,3,4}+\Ket{001}_t^{2,3,4}\right]\\
&=\Ket{1}_t^{1}\Ket{0}_t^{2}\Ket{00}_t^{3,4}\\
&\qquad+\Ket{0}_t^{1}\left[\Ket{1}_t^{2}\Ket{00}_t^{3,4}+\Ket{0}_t^{2}\left(\Ket{10}_t^{3,4}+\Ket{01}_t^{3,4}\right)\right]\\
&\in\mathcal{H}_t^{v_1}\otimes\mathcal{H}_t^{v_2}\otimes\mathcal{H}_t^{D'_{v_3}},
\end{align*}
where $D'_{v_3}=\{v_3, v_4\}$.
Similarly to the operation for $v_1$, $\Ket{00}_t^{3,4}$ and $\Ket{10}_t^{3,4}+\Ket{01}_t^{3,4}$ can be encoded in an auxiliary qubit $\mathcal{H}_a^{v_2}$ on $v_2$ as $\Ket{0}_a^2$ and $\Ket{1}_a^2$, respectively.
Then $v_2$ transforms the incoming resource qubit states $\Ket{0}^2_{r_{\{v_1,v_2\}}}$ and $\Ket{1}^2_{r_{\{v_1,v_2\}}}$ {encoding} $\Ket{0}_t^2\Ket{00}_t^{3,4}$ and $\Ket{1}_t^2\Ket{00}_t^{3,4}+\Ket{0}_t^2(\Ket{10}_t^{3,4}+\Ket{01}_t^{3,4})$ into $\Ket{0}_t^2\Ket{0}_a^2$ and $\Ket{1}_t^2\Ket{0}_a^2+\Ket{0}_t^2\Ket{1}_a^2$, respectively by a unitary transformation $U_{v_2}$ by introducing the auxiliary qubit and the target qubit, and sends the state on $\mathcal{H}_a^{v_2}$ to $v_3$ by quantum teleportation using the Bell state $\ket{\Phi_2^+}_{r_{\{v_2,v_3\}}}^{2,3}$  consisting of the initial resource state.
The state preparation followed by quantum teleportation is equivalent to the measurement of $v_2$ given by Eq.~\eqref{eq:POVM_each} on the initial resource qubit $\mathcal{H}_r^{v_2}$ followed by the correction on the incoming initial resource qubit $\mathcal{H}_{r_{\{v_2,v_3\}}}^{v_3}$ at $v_3$.

Finally, $v_3$ transforms the incoming resource qubit states $\Ket{0}_{r_{\{v_2,v_3\}}}^3$ and $\Ket{1}_{r_{\{v_2,v_3\}}}^3$ into $\Ket{0}_t^3\Ket{0}_a^3$ and $\Ket{1}_t^3\Ket{0}_a^3+\Ket{0}_t^3\Ket{1}_a^3$, respectively, and sends the state on $\mathcal{H}_a^{v_3}$ to $v_4$ by quantum teleportation using the Bell state $\ket{\Phi_2^+}_{r_{\{v_3,v_4\}}}^{3,4}$  consisting of the initial resource state.  Once the correction on the incoming resource qubit $\mathcal{H}_{r_{\{v_3,v_4\}}}^{v_4}$ at $v_4$ is completed, the incoming resource qubit turns out to be the target qubit state of $v_4$.

\end{example}

The next example is a case in which a vertex may have multiple children.
\begin{example}[\label{ex:2}Exact state construction with multiple children]
Similarly to Example~\ref{ex:1}, consider exact construction of $\Ket{W_4}$ under a line-topology graph $T'=(V, E')$ illustrated at the bottom of Fig.~\ref{fig:example}, where $V = \{v_1, v_2, v_3, v_4\}$ and $E'=\left\{\left\{v_1, v_2\right\}, \left\{v_1, v_3\right\}, \left\{v_3, v_4\right\}\right\}$.
Now, the root $v_1$ has two children $v_2$ and $v_3$.

We show the operation for $v_1$.
Using Proposition~\ref{lem:decomposition_tree}, we obtain
\begin{align*}
  &\Ket{W_4}_t^{1,2,3,4}\\
  &\propto\Ket{1}_t^1\Ket{0}_t^2\Ket{00}_t^{3,4}+\Ket{0}_t^1\Ket{1}_t^2\Ket{00}_t^{3,4}\\
  &\qquad+\Ket{0}_t^1\Ket{0}_t^2(\Ket{10}_t^{3,4}+\Ket{01}_t^{3,4})\\
  &\in\mathcal{H}_t^{v_1}\otimes\mathcal{H}_t^{D'_{v_2}}\otimes\mathcal{H}_t^{D'_{v_3}},
\end{align*}
where $D'_{v_2}=\{v_2\}$ and $D'_{v_3}=\{v_3,v_4\}$, and the notations of the superscripts and subscripts for each state are the same as the ones in Example~\ref{ex:1}.
In the same way as Example~\ref{ex:1}, we can encode $\Ket{00}_t^{3,4}$ and $\Ket{10}_t^{3,4}+\Ket{01}_t^{3,4}$  as $\Ket{0}_a^1$ and $\Ket{1}_a^1$, respectively.
We also encode $\Ket{0}_t^2$ and $\Ket{1}_t^2$ using another auxiliary qubit $\mathcal{H}_{a'}^{v_1}$ at $v_1$ as $\Ket{0}_{a'}^1$ and $\Ket{1}_{a'}^1$ respectively.
Therefore $v_1$ prepares $\Ket{1}_t^1\Ket{0}_{a'}^1\Ket{0}_a^1 + \Ket{0}_t^1\Ket{1}_{a'}^1\Ket{0}_a^1+\Ket{0}_t^1\Ket{0}_{a'}^1\Ket{1}_a^1$, and sends the states on $\mathcal{H}_{a'}^{v_1}$ and $\mathcal{H}_a^{v_1}$ to $v_2,v_3$ by quantum teleportation using the Bell states $\ket{\Phi_2^+}_{r_{\{v_1,v_2\}}}^{1,2}$ and $\ket{\Phi_2^+}_{r_{\{v_1,v_3\}}}^{1,3}$ consisting of the initial resource states, respectively.   After completing quantum teleportation, the incoming initial resource qubit of $v_2$ turns out to be in the target state.   The operations for $v_3$ and  $v_4$ are the same as the ones for $v_3$ and $v_4$ in Example~\ref{ex:1}.
\end{example}

\subsection{Connection with projected entangled pair states}
The presented algorithm for distributed state construction is \textit{deterministic} in contrast with a \textit{stochastic} LOCC (SLOCC) algorithm obtained from a state description by projected entangled pair states (PEPS)~\cite{PEPS}.
PEPS, including matrix product states (MPS) and tree tensor networks (TTN) as special cases, provides efficient classical description of a class of quantum multipartite states,
in which parameters of the states are expressed in terms of multiple tensors with their indices contracted according to a given network pattern.  A PEPS is a state in the form of
\begin{equation}
    \label{eq:PEPS}
    \left(\bigotimes_{v\in V}A^v\right)\bigotimes_{e\in E}\Ket{\Phi_{m_e}^+}^e,
\end{equation}
where graph $G=(V,E)$ is a line-topology graph for MPS, a tree for TTN and any graph for PEPS, and $A^v$ is a linear operator acting on the part of the bipartite maximally entangled states belonging to $v$, which corresponds to each tensor in the tensor network.

State description by PEPS does not necessarily provide an efficient method for distributed construction of the state with high success probability, while a number of algorithms in various settings are presented for constructing states described in PEPS using quantum computers~\cite{PRL110503,PRL110502,PRA032321,PRL080503}.  For state construction under a graph $G$, the description of the target state in the form of Eq.~\eqref{eq:PEPS} provides an SLOCC algorithm, in which each party $v\in V$ performs a binary outcome measurement of the initial resource state $\Ket{\Phi_{\textup{res}}(G)}=\bigotimes_{e\in E}\Ket{\Phi_{m_e}^+}^e$ given by $\left\{A^v, \sqrt{\openone-A^{v\dag} A^v}\right\}$, where $\openone$ denotes the identity operator, to transform $\Ket{\Phi_{\textup{res}}(G)}$ into the target state Eq.~\eqref{eq:PEPS} with nonzero probability.
Note that, in the SLOCC algorithm, classical communication is not necessary and parties can independently perform their measurement to achieve nonzero success probability, while classical communication may increase the success probability.

In contrast, in our LOCC setting, we achieve deterministic and exact construction of the target state by choosing the measurement operators so that all the measurements are correctable.  This means that a target state described by PEPS with tree networks (namely, TTN) is deterministically constructible by introducing an appropriate measurement order. The canonical form of the TTN is uniquely determined and can be efficiently calculated~\cite{PRA022320}.  The measurement operators used in our algorithm for exact state construction can be expressed in terms of the tensors in the canonical form of the TTN by modifying the tensors according to Proposition~\ref{lem:decomposition_tree}.   In Appendix~\ref{app:demo}, we show an explicit description of the measurement operators for a state given by the canonical form of a special case of TTN given by a line-topology graph, namely MPS~\cite{APractical}.

We remark that extension of our algorithm to arbitrary PEPS is not straightforward.
One reason for this difficulty is that there exists a multipartite state which satisfies the Schmidt rank condition with respect to any connected bipartite cut but cannot be expressed in terms of PEPS with its bond dimension satisfying the same condition, as we see through a counter example (Remark~\ref{ex:3}) presented in the next subsection.

\subsection{Implications of exact state construction}

We remark implications of exact state construction.

For exact construction of the $N$-qubit $W$ states defined by
\[
    \Ket{W}\coloneqq\frac{1}{\sqrt{N}}\left(\Ket{100\cdots 0}+\Ket{010\cdots 0}+\cdots+\Ket{000\cdots 1}\right)
\]
and the $N$-qubit GHZ states
\[
  \Ket{\textup{GHZ}}\coloneqq\frac{1}{\sqrt{2}}\left(\Ket{00\cdots 0}+\Ket{11\cdots 1}\right),
\]
using initial resource states represented by trees achieves the smallest exact total graph-associated entanglement cost among any graphs.
\begin{remark}[Exact construction of $W$ and GHZ states]
By applying Theorem~\ref{thm:exact} to the $N$-qubit $W$ states and GHZ states over $N$ parties, we can derive the smallest exact total graph-associated entanglement cost among \textit{any} graph representing the restriction on quantum communication among the $N$ parties.
The $N$-qubit $W$ and GHZ states over $N$ parties have the Schmidt rank of $2$ with respect to any bipartition.
Therefore, Theorem~\ref{thm:exact} implies that, under any tree $T$, the $N$-qubit $W$ and GHZ states can be constructed with consumption of one Bell state $\ket{\Phi_2^+}$ at each edge,
i.e.,\ 
  \[
  E_{\textup{GC},i,e}^T = 1.
\]

Note that any connected graph contains a spanning tree, which can be used for state construction.
Since any tree has $N-1$ edges, the $N$-qubit $W$ states and GHZ states can be constructed with consumption of $N-1$ Bell states in total.
This provides the smallest exact total graph-associated entanglement cost among any graphs for exact construction of the $N$-qubit $W$ and GHZ states, since any party has to consume at least one Bell state initially shared with another.  The similar optimality can be shown for any genuinely entangled states whose Schmidt rank with respect to any bipartition is $2$.
\end{remark}

In comparison with the state distribution algorithm in which the root prepares the whole target state and distributes the corresponding part of the state to each party, the following remark illustrates quadratic saving of total resource consumption in exact construction of the same target state.
\begin{remark}[Advantage of exact state construction in comparison with simple distribution]
    Consider preparation of the $N$-qubit $W$ state over $N$ parties represented by $\{v_1,v_2,\ldots,v_N\}$, where the initial resource state is in a line topology connecting $\{v_i,v_{i+1}\}$ for all $i\in\{1,2,\ldots,N-1\}$.
    Even if we choose $v_1$ as the root, which is placed at an end of the line topology, exact construction of the $W$ state can be performed with consumption of one Bell state at each edge, which amounts to $N-1$ Bell states in total.
    In contrast, if $v_1$ prepares the $N$-qubit $W$ state on its own, then to distribute the $W$ state from one party to another by quantum teleportation consuming the initial resources, $N-i$ Bell states are consumed at the edge $\{v_i,v_{i+1}\}$ for all $i\in\{1,2,\ldots,N-1\}$, which amounts to $\sum_{i=1}^{N-1}\left(N-i\right)=(1/2)N(N-1)$ Bell states in total.
    The difference in this example is quadratic with respect to the number of the parties $N$.
\end{remark}

We also remark that, for a nontree graph, the Schmidt rank of a target state with respect to a bipartite cut does not tell whether the target state can be exactly constructible from the initial resource state, as we show a counter example.
\begin{remark}[\label{ex:3}Exact state construction not under trees]
    Given four parties $v_1$, $v_2$, $v_3$ and $v_4$, one counter example is the case when the initial resource state consists of four Bell states shared between $v_1$--$v_2$, $v_2$--$v_3$, $v_3$--$v_4$ and $v_4$--$v_1$, whose underlying graph contains a cycle, and the target state two Bell states shared between $v_1$--$v_3$ and  $v_2$--$v_4$.  For a bipartite cut $\{v_1,v_2\}$ and $\{v_3,v_4\}$, the Schmidt rank of the initial resource state is $4$.  The Schmidt rank of the target state with respect to the bipartite cut $\{v_1,v_2\}$ and $\{v_3,v_4\}$ is also $4$.  However, this exact state construction has been proven to be impossible even probabilistically in Ref.~\cite{RefWorks:167}.
\end{remark}

\section{\label{sec:4}Approximate state construction under trees}

\subsection{Graph-associated entanglement cost for approximate state construction}

We present a second-order asymptotic analysis of $(n,\epsilon)$-approximate construction tasks under trees.   In bipartite cases, the entanglement cost in second-order asymptotic analysis can be expressed in terms of the quantum information spectrum entropy~\cite{RefWorks:160}.
The quantum information spectrum entropy is defined for any density operator $\rho$ as
\[
    \overline{H}_{\textup{S}}^{\epsilon}\left(\rho\right)\coloneqq \inf\left\{\gamma\in\mathbb{R}\colon\tr{\left(\rho - 2^{- \gamma}\openone\right)} _+ \geqq 1-\epsilon\right\},
\]
where ${(\cdot)}_+$ represents the operator projected onto the non-negative spectra.
Given a pure state $\Ket{\psi}$ shared between A and B, the $(n,\epsilon)$-approximate bipartite entanglement cost $E_{\textup{C}}^{n,\epsilon}\left(\psi\right)$ per construction of one state can be bounded by using $\overline{H}_{\textup{S}}^{\epsilon}$ and any fixed $\delta, \eta>0$, as
\begin{equation*}
    \begin{split}
        &\overline{H}_{\textup{S}}^{\epsilon^2/4+\eta}\left(\rho_A^{\otimes n}\right)-\delta+\log_2 \eta\\
        &\leqq n E_{\textup{C}}^{n,\epsilon}\left(\psi\right) \leqq \overline{H}_{\textup{S}}^{\epsilon^2/4}\left(\rho_A^{\otimes n}\right),
\end{split}
\end{equation*}
where $\rho_A=\tr_B \psi$, and the error threshold $\epsilon$ for approximation is evaluated by the trace distance.  We note that, in Ref.~\cite{RefWorks:160}, the fidelity is used to evaluate the error threshold for approximation in place of the trace distance.

The quantum information spectrum entropy can be evaluated by numerical calculation of the smooth min-entropy by semidefinite programming, which gives a tight bound~\cite{HierarchyOfInformationQuantities}.   A good approximation of the quantity for large $n$ is obtained by the second-order expansion of the quantum information spectrum entropy~\cite{RefWorks:160}
\begin{equation*}
    \begin{split}
        \overline{H}_{\textup{S}}^{\epsilon^2/4}\left(\rho^{\otimes n}\right)=&a(\rho)n+b(\rho,\epsilon)\sqrt{n}+\mathcal{O}\left(\log n\right) \,(\textup{as } n\rightarrow\infty),\\
        a(\rho) =& S\left(\rho\right),\\
        b(\rho,\epsilon)=&-s\left(\rho\right)\Phi^{-1}\left(\epsilon^2/4\right),
\end{split}
\end{equation*}
where  $S(\rho)\coloneqq-\tr\rho\log_2\rho$ is the von Neumann entropy,
$s\left(\rho\right)= \sqrt{\tr\rho{\left(\log_2 \rho\right)} ^2 - {S\left(\rho\right)} ^2}$ is
the quantum information standard deviation, and
$\Phi\left(z\right)=\left(1/{\sqrt{2\pi}}\right)\int_{-\infty}^{z}e^{-t^2/2}dt$
the cumulative distribution function of a standard normal random variable.  The first-order coefficient $a(\rho)$ represents the asymptotic rate, while the second-order coefficient $b(\rho,\epsilon)$ is the difference from the asymptotic rate in the finite block length regime.

For approximate construction of a multipartite state $\ket{\psi}$ under tree $T=(V,E)$,  we evaluate the $(n,\epsilon)$-approximate edge graph-associated entanglement cost $E_{\textup{GC},i,e}^{n,\epsilon,T}\left(\psi\right)$ for each edge $e \in E$ by applying the second-order asymptotic analysis.    We derive upper and lower bounds of $E_{\textup{GC},i,e}^{n,\epsilon,T}\left(\psi\right)$ expressed in terms of the quantum information spectrum entropy of $n$ copies of the reduced density operator $\rho_e^{\otimes n}$ of $\ket{\psi}$ with respect to $e$.   We obtain the following theorem.

\begin{theorem}[Approximate edge graph-associated entanglement cost]
\label{thm:asymptotic}
Given any tree $T=(V,E)$ and any $N$-partite pure target state $\Ket{\psi}\in\mathcal{H}_t$,
fix $n>0$ and $\epsilon>0$.
The following bounds hold for any configuration $i$ of the optimal resource state for the $(n,\epsilon)$-approximate construction of $\Ket{\psi}$ under $T$.
\begin{enumerate}
    \item The upper bound: For any error thresholds at respective edges, denoted by $\epsilon'(e)>0$ for all $e\in E$, satisfying
\[
  \sqrt{\sum_{e\in E}{\epsilon'(e)}^2}\leqq\epsilon,
\]
it holds that
    \[
        \sum_{e\in E}E_{\textup{GC},i,e}^{n,\epsilon,T}\left(\psi\right) \leqq \sum_{e\in E}\frac{\overline{H}_{\textup{S}}^{{\epsilon'(e)}^2/4}\left(\rho_e^{\otimes n}\right)}{n}.
    \]
\item The lower bound: For each $e\in E$ and any $\delta,\eta>0$,
    \[
        E_{\textup{GC},i,e}^{n,\epsilon,T}{\left(\psi\right)}\geqq \frac{\overline{H}_{\textup{S}}^{\epsilon^2/4+\eta}\left(\rho_e^{\otimes n}\right)-\delta+\log_2 \eta}{n}.
    \]
\end{enumerate}
\end{theorem}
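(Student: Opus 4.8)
The two bounds are handled independently. For the \emph{lower bound}, fix an edge $e=\{p(v),v\}$. Removing $e$ splits $T$ into the connected components with vertex sets $D'_v$ and $\overline{D'_v}$, and since $T$ is a tree every other edge $e'\neq e$ lies entirely inside one of these components; hence each $\Ket{\Phi^+_{m_{e'}}}^{e'}$ is a product state across the bipartition $(D'_v\,|\,\overline{D'_v})$, and the only cross-cut entanglement in $\Ket{\Phi_{\textup{res}}(T)}$ is the $\log_2 m_e$ ebits carried by $\Ket{\Phi^+_{m_e}}^{e}$. Viewing the parties in $D'_v$ and in $\overline{D'_v}$ as two super-parties turns the $(n,\epsilon)$-approximate protocol into a bipartite LOCC protocol that, from $\log_2 m_e$ ebits, produces an output whose restriction to the cut is $\epsilon$-close in trace distance to the bipartite pure state $\Ket{\psi}^{\otimes n}$ with marginal $\rho_e^{\otimes n}$. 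Then the bipartite second-order converse of Ref.~\cite{RefWorks:160} (the left-hand inequality quoted above, with $\rho_A$ replaced by $\rho_e$) gives $\log_2 m_e(i,n,\epsilon)\geq\overline{H}_{\textup{S}}^{\epsilon^2/4+\eta}(\rho_e^{\otimes n})-\delta+\log_2\eta$ for all $\delta,\eta>0$, and dividing by $n$ yields the claim for the optimal configuration~$i$.

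For the \emph{upper bound}, set $\gamma_e\coloneqq\overline{H}_{\textup{S}}^{\epsilon'(e)^2/4}(\rho_e^{\otimes n})$ for each $e$ (replace by $\gamma_e+\nu$ and let $\nu\downarrow 0$ at the end if the infimum is not attained). By the definition of $\overline{H}_{\textup{S}}$ there is a set $S_e$ of Schmidt indices of $\Ket{\psi}^{\otimes n}$ across the cut induced by $e$ with retained weight $\sum_{l\in S_e}\lambda_l^{(n)}\geq 1-\epsilon'(e)^2/4$ and $\lvert S_e\rvert\leq 2^{\gamma_e}$. Using the recursive description of $\Ket{\psi}^{\otimes n}$ in Proposition~\ref{lem:decomposition_tree} (equivalently, truncating the bonds of the canonical tree tensor network), define $\Ket{\tilde\psi}$ by restricting, for each non-root $v$, the summation over the Schmidt index $l_v$ attached to $e=\{p(v),v\}$ to $l_v\in S_e$, independently at every edge. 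Then $\Ket{\tilde\psi}$ has Schmidt rank $R'_e\leq\lvert S_e\rvert\leq 2^{\gamma_e}$ across every $e$, so by Theorem~\ref{thm:exact} it is constructed \emph{exactly} under $T$ from $\bigotimes_e\Ket{\Phi^+_{R'_e}}^{e}$, with $\sum_e\log_2 R'_e\leq\sum_e\gamma_e$. Once we know that the truncated (normalized) state obeys $\lVert\tilde\psi-\psi^{\otimes n}\rVert_1\leq\epsilon$, this resource state lies in $\mathcal{R}^{n,\epsilon}_T(\psi)$, so $\sum_e E_{\textup{GC},i,e}^{n,\epsilon,T}(\psi)=E_{\textup{GC}}^{n,\epsilon,T}(\psi)\leq n^{-1}\sum_e\gamma_e$, as asserted.

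The remaining and hardest step is the infidelity estimate $1-\lvert\langle\psi^{\otimes n}|\tilde\psi\rangle\rvert^2\leq\sum_e\epsilon'(e)^2/4$, which gives $\lVert\tilde\psi-\psi^{\otimes n}\rVert_1\leq 2\sqrt{\sum_e\epsilon'(e)^2/4}=\sqrt{\sum_e\epsilon'(e)^2}\leq\epsilon$. In isolation the restriction at edge $e$ is the projection $\Pi_{S_e}$ onto the retained Schmidt vectors and discards only $\epsilon'(e)^2/4$ of the weight, but the projectors attached to nested edges do not commute, so a direct non-commutative union bound over the edges would give only the weaker $\ell^1$-type accumulation $\sum_e\epsilon'(e)$ in the trace distance. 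I would instead argue recursively along the rooted tree: for a fixed vertex $v$ the projectors attached to its different children act on mutually orthogonal tensor factors of $\mathcal{H}_t^{D'_v}$, so a union bound is valid \emph{locally at each vertex}; combining this with the elementary fact that truncating on one side of a cut can only decrease the reduced operator on the other side in the positive-semidefinite order, one propagates the estimate from the leaves to the root through Proposition~\ref{lem:decomposition_tree}, the per-edge discarded weights $\epsilon'(e)^2/4$ adding up exactly as required. Controlling the interference between truncations at nested edges — via the canonical tree-tensor-network structure rather than operator commutation — is the main obstacle, and is exactly where this analysis refines the sketch of Ref.~\cite{RefWorks:148}.
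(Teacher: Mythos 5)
Your lower-bound argument coincides with the paper's: reduce to the bipartition induced by removing $e$, observe that every other resource Bell pair is local to one side of the cut so the cross-cut entanglement of $\Ket{\Phi_{\textup{res}}(T)}$ is exactly $\log_2 m_e$ ebits, and invoke the bipartite second-order converse of Ref.~\cite{RefWorks:160}. That part is correct. Your upper-bound architecture also matches the paper's: build an approximate state whose Schmidt rank across every cut is at most $2^{\gamma_e}$ with $\gamma_e=\overline{H}_{\textup{S}}^{{\epsilon'(e)}^2/4}(\rho_e^{\otimes n})$, then construct it exactly by Theorem~\ref{thm:exact}.

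However, the step you yourself identify as ``the remaining and hardest step''---the estimate $\lVert\tilde\psi-\psi^{\otimes n}\rVert_1\leq\sqrt{\sum_e{\epsilon'(e)}^2}$---is not actually proved. You sketch a leaf-to-root recursion based on local orthogonality at each vertex and a positive-semidefinite monotonicity claim, but you do not carry it out, and it is not clear that the per-edge discarded weights accumulate as claimed once truncations at nested edges interact; this is a genuine gap. Moreover, you dismiss the tool that closes it on mistaken grounds. Gao's noncommutative union bound for sequential projections (Ref.~\cite{pra052331}, Lemma~\ref{lem:union_bound}) states that if $\tr\Pi_i\rho=1-\epsilon_i$, then the normalized sequentially projected state $\tilde\rho$ satisfies $\lVert\rho-\tilde\rho\rVert_1\leq 2\sqrt{\sum_i\epsilon_i}$: the failure probabilities add \emph{inside} the square root, not the square roots themselves. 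Taking $\epsilon_i={\epsilon'(e_i)}^2/4$ gives exactly $2\sqrt{\sum_e{\epsilon'(e)}^2/4}=\sqrt{\sum_e{\epsilon'(e)}^2}$, not the ``$\ell^1$-type accumulation $\sum_e\epsilon'(e)$'' you attribute to it (that weaker bound is what the gentle-measurement lemma plus the triangle inequality would give, one projection at a time). This is why the paper defines $\Ket{\tilde\psi_n}$ by \emph{sequentially} applying the spectral projections $\Pi_{e,\psi}$ (Proposition~\ref{prp:approximation}) rather than by independently truncating every bond; the rank across each cut is still bounded by $\rank\Pi_{\rho_e}\leq 2^{\gamma_e}$ because, $T$ being a tree, every other projection acts entirely on one side of that cut and cannot increase the Schmidt rank across it. Replacing your unproven recursion by Lemma~\ref{lem:union_bound}, with the sequential definition of the approximate state, closes your proof.
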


The asymptotic limit of the above theorem yields the following, which coincides with the one shown by Galv\~ao \textit{et al.}~\cite{RefWorks:148}.
\begin{corollary}[Asymptotic edge graph-associated entanglement cost]
\label{cor:asymptotic_cost}
Given any tree $T=(V,E)$ and any $N$-partite pure target state $\Ket{\psi}\in\mathcal{H}_t$, for each $e\in E$,
\[
E^{\infty, T}_{\textup{GC},i,e}(\psi)=S\left(\rho_e\right),
\]
where the configuration index $i$ is uniquely determined.
\end{corollary}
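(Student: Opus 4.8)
The plan is to derive the corollary as the asymptotic limit of the finite-block-length bounds of Theorem~\ref{thm:asymptotic}, combined with the second-order expansion of the quantum information spectrum entropy recalled above. First I would fix an arbitrary $\epsilon>0$ and compute $\lim_{n\to\infty}\frac{1}{n}\log_2 m_e(i,n,\epsilon)=\lim_{n\to\infty}E_{\textup{GC},i,e}^{n,\epsilon,T}(\psi)$ for each edge $e$, leaving the (trivial) outer limit $\epsilon\to 0$ to the very end.

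For the lower bound I would apply the lower bound of Theorem~\ref{thm:asymptotic} together with the expansion $\overline{H}_{\textup{S}}^{\epsilon^2/4+\eta}(\rho_e^{\otimes n})=S(\rho_e)\,n+b(\rho_e,\epsilon,\eta)\sqrt{n}+\mathcal{O}(\log n)$. Since $\delta$ and $\eta$ are fixed constants, dividing by $n$ sends the $\sqrt{n}$, $\log n$, and constant terms to zero, so $\liminf_{n\to\infty}E_{\textup{GC},i,e}^{n,\epsilon,T}(\psi)\geqq S(\rho_e)$ for every $e\in E$. For the upper bound I would specialize the upper bound of Theorem~\ref{thm:asymptotic} by taking the per-edge thresholds $\epsilon'(e)\coloneqq\epsilon/\sqrt{N-1}$, which satisfies $\sqrt{\sum_{e\in E}\epsilon'(e)^2}=\epsilon$ since a tree on $N$ vertices has $N-1$ edges; the same expansion then gives $\sum_{e\in E}E_{\textup{GC},i,e}^{n,\epsilon,T}(\psi)\leqq\sum_{e\in E}S(\rho_e)+\mathcal{O}(1/\sqrt{n})$, hence $\limsup_{n\to\infty}\sum_{e\in E}E_{\textup{GC},i,e}^{n,\epsilon,T}(\psi)\leqq\sum_{e\in E}S(\rho_e)$.

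The remaining step converts this \emph{summed} upper bound into a \emph{per-edge} one by a squeeze argument. Fixing a reference edge $e_0$ and writing $E_{\textup{GC},i,e_0}^{n,\epsilon,T}(\psi)=\sum_{e\in E}E_{\textup{GC},i,e}^{n,\epsilon,T}(\psi)-\sum_{e\neq e_0}E_{\textup{GC},i,e}^{n,\epsilon,T}(\psi)$ and applying the per-edge lower bounds to the subtracted terms, I would obtain $\limsup_{n\to\infty}E_{\textup{GC},i,e_0}^{n,\epsilon,T}(\psi)\leqq S(\rho_{e_0})$, which with the matching $\liminf$ yields $\lim_{n\to\infty}E_{\textup{GC},i,e}^{n,\epsilon,T}(\psi)=S(\rho_e)$ for each $e$. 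As this limit does not depend on $\epsilon$, the subsequent limit $\epsilon\to 0$ gives $E_{\textup{GC},i,e}^{\infty,T}(\psi)=S(\rho_e)$; uniqueness of the configuration $i$ in the limit follows because the edge costs are all pinned to the single value $S(\rho_e)$, so no residual freedom survives.

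I expect the only real work to be bookkeeping: because the upper bound of Theorem~\ref{thm:asymptotic} controls only the sum over edges, one must be careful that the squeeze does not leak an additive constant, and one must verify that the freedom in distributing $\epsilon$ among the edges—which changes the second-order coefficients $b(\rho_e,\epsilon'(e),\cdot)$—is genuinely immaterial after dividing by $n$. Neither point is a genuine obstacle, so the corollary is essentially a limiting consequence of Theorem~\ref{thm:asymptotic}.
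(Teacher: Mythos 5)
Your proposal is correct and follows essentially the same route as the paper: both combine the summed upper bound and the per-edge lower bound of Theorem~\ref{thm:asymptotic} with the second-order expansion of $\overline{H}_{\textup{S}}^{\epsilon^2/4}$, so that after dividing by $n$ only the first-order term $S(\rho_e)$ survives. The only difference is that you make explicit the squeeze step converting the summed upper bound into a per-edge one, which the paper leaves implicit when it asserts that the costs coincide in the asymptotic limit.
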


While the lower bound in Theorem~\ref{thm:asymptotic} is deduced from the bipartite cases as presented in Appendix~\ref{app:b}, the upper bound is nontrivial.  To prove the upper bound, we explicitly present an efficient distributed algorithm for approximate construction of a multipartite state under a tree.  In the asymptotic limit, the strategy sketched in Ref.~\cite{RefWorks:148} coincides with our algorithm.

Approximate construction of $n$ copies of the target state $\Ket{\psi}^{\otimes n}$ under tree $T$ can be achieved by exact construction of an approximate state $\Ket{\tilde\psi_n}$, which is $\epsilon$-close to $\Ket{\psi}^{\otimes n}$ in terms of the trace distance.   For each $e = \{p(v),v\}\in E$ of a rooted tree, we define the following projections on the Hilbert spaces of $n$ copies of the target state for approximation, which only act nontrivially on the Hilbert spaces at the vertices in $D'_v$,  the set of $v$'s/ descendants and $v$ itself, given by Eq.~\eqref{eq:v_each}.

\begin{definition}[Projections for approximation of multipartite states]
\label{def:projection}
Given a rooted tree $(T, v_1)$ where $T=(V,E)$, an $N$-partite pure state $\Ket{\psi}\in\mathcal{H}_t$, fixed $n>0$, and an error threshold $\epsilon'(e)>0$ for $e = \{p(v),v\}\in E$, we define a projection $\Pi_{\rho_e}^{n,\epsilon'(e)}$ onto the non-negative spectra of $\rho_e^{\otimes n}-2^{-\gamma_e}\openone$ of ${\left(\mathcal{H}^{D'_v}_t\right)}^{\otimes n}$, where $\rho_e$ is the reduced density operator of $\Ket{\psi}$ with respect to $e$ and $\gamma_e$ is given by the quantum information spectrum entropy $\gamma_e=\overline{H}_{\textup{S}}^{{\epsilon'(e)}^2/4}\left(\rho_e^{\otimes n}\right)$. We write a projection on ${(\mathcal{H}_t)}^{\otimes n}$  as $\Pi_{e,\psi}^{n,\epsilon'(e)}\coloneqq\Pi_{\rho_e}^{n,\epsilon'(e)}\otimes\openone$.  We omit $n,\epsilon'(e)$ in the superscripts of the projections to write $\Pi_{\rho_e}$ and $\Pi_{e,\psi}$ if obvious.
\end{definition}

A straightforward calculation shows that, for each $e\in E$, it holds that
\begin{equation}
    \label{eq:projection}
    \begin{split}
    \tr \Pi_{e,\psi}\psi^{\otimes n} &= \tr \Pi_{\rho_e}\rho_e^{\otimes n}\\
    &\geqq\tr\Pi_{\rho_e}{\left(\rho_e^{\otimes n}-2^{-\gamma_e}\openone\right)}_+\\
    &=1-\left({\epsilon'(e)}^2/4\right).
\end{split}
\end{equation}
We provide a method for obtaining a good approximation for a large number of identical copies of $\Ket{\psi}$ in the following proposition, which is an application of noncommutative union bound~\cite{pra052331}.  The proof of the proposition is presented in Appendix~\ref{app:b}.   
\begin{proposition}[Approximation of multipartite states]
\label{prp:approximation}
Given any rooted tree $(T,v_1)$ where $T=(V,E)$ and any $N$-partite pure state $\Ket{\psi}\in\mathcal{H}_t$, fix $n > 0$ and $\epsilon'\colon E\rightarrow (0,\infty)$.
Then, it holds that
\begin{equation}
\label{eq:approximation}
\begin{split}
    \left\|\psi^{\otimes n}-\tilde\psi_n\right\| _1 &\leqq\sqrt{\sum_{e\in E}{\epsilon'(e)}^2},
\end{split}
\end{equation}
where  $\tilde\psi_n=\Ket{\tilde\psi_n}\Bra{\tilde\psi_n}$ is defined by
\begin{equation}
\label{eq:approximated_state}
\Ket{\tilde\psi_n}\coloneqq\frac{\Pi_{e_{N-1},\psi}\Pi_{e_{N-2},\psi}\cdots\Pi_{e_1,\psi}\Ket{\psi}^{\otimes n}}{\left\|\Pi_{e_{N-1},\psi}\Pi_{e_{N-2},\psi}\cdots\Pi_{e_1,\psi}\Ket{\psi}^{\otimes n}\right\|}.
\end{equation}
\end{proposition}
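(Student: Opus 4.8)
The plan is to read $\tilde\psi_n$ as the (renormalized) state produced by performing a sequence of two-outcome projective measurements on $\psi^{\otimes n}$, post-selected on the favorable outcomes, and then to control the resulting disturbance with the noncommutative union bound.

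Writing the $N-1$ edges of $T$ as $e_1,\dots,e_{N-1}$, I would first observe that the unnormalized vector $\Pi_{e_{N-1},\psi}\cdots\Pi_{e_1,\psi}\Ket{\psi}^{\otimes n}$ in~\eqref{eq:approximated_state} is exactly the post-measurement vector obtained by applying the projective measurements $\{\Pi_{e_k,\psi},\openone-\Pi_{e_k,\psi}\}$ to $\psi^{\otimes n}$ in the order $k=1,\dots,N-1$ and keeping, at each step, the branch in which the $\Pi_{e_k,\psi}$ outcome occurs; then $\Ket{\tilde\psi_n}$ is its normalization. The ordering of the factors is dictated by the breadth-first labeling (and matters for the construction algorithm), but not for the estimate below. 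What does matter is that the projections $\Pi_{e,\psi}=\Pi_{\rho_e}\otimes\openone$ for distinct edges generally fail to commute with one another: for $e=\{p(v),v\}$, $\Pi_{\rho_e}$ is the spectral projector of $\rho_e^{\otimes n}$, hence diagonal in the Schmidt basis of $\big(\mathcal{H}_t^{D'_v}\big)^{\otimes n}$, and the Schmidt bases attached to different edges need not be compatible with one another's tensor-factor structure. Consequently a plain (commutative) union bound is unavailable, which is precisely why the \emph{noncommutative} union bound is invoked.

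Next I would substitute the per-edge estimate already recorded in~\eqref{eq:projection}, namely $\epsilon_e\coloneqq\tr\!\left((\openone-\Pi_{e,\psi})\psi^{\otimes n}\right)=1-\tr(\Pi_{\rho_e}\rho_e^{\otimes n})\leqq{\epsilon'(e)}^2/4$, which holds because $\gamma_e$ in Definition~\ref{def:projection} is chosen to be $\overline{H}_{\textup{S}}^{{\epsilon'(e)}^2/4}(\rho_e^{\otimes n})$. Applying the noncommutative union bound of Ref.~\cite{pra052331} to the above sequence of projective measurements on the pure state $\psi^{\otimes n}$ gives $\left\|\psi^{\otimes n}-\tilde\psi_n\right\|_1\leqq 2\sqrt{\sum_{k=1}^{N-1}\epsilon_{e_k}}=2\sqrt{\sum_{e\in E}\epsilon_e}$; equivalently, one first establishes the fidelity bound $1-\left|\langle\psi^{\otimes n}|\tilde\psi_n\rangle\right|^2\leqq\sum_{e\in E}\epsilon_e$ and converts it via the pure-state identity $\|\,|\alpha\rangle\langle\alpha|-|\beta\rangle\langle\beta|\,\|_1=2\sqrt{1-|\langle\alpha|\beta\rangle|^2}$. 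Feeding in $\epsilon_e\leqq{\epsilon'(e)}^2/4$ then yields $\left\|\psi^{\otimes n}-\tilde\psi_n\right\|_1\leqq 2\sqrt{\tfrac14\sum_{e\in E}{\epsilon'(e)}^2}=\sqrt{\sum_{e\in E}{\epsilon'(e)}^2}$, which is~\eqref{eq:approximation}; note that the factor $1/4$ in the smoothing parameter of Definition~\ref{def:projection} is chosen exactly so as to absorb the constant $2$ coming out of the union bound.

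The step I expect to be the crux is obtaining the $\ell_2$-type aggregation $\sqrt{\sum_e{\epsilon'(e)}^2}$ rather than the crude $\sum_e\epsilon'(e)$: a naive telescoping of the $N-1$ projections through the triangle inequality only delivers $\sum_e\sqrt{\epsilon_e}$, which is too weak, and the Pythagorean-type form genuinely needs the noncommutative union bound in its sharp version (its $m=1$ case already saturates the constant $2$, i.e.\ reduces to the gentle-measurement lemma). A minor point is well-definedness of $\Ket{\tilde\psi_n}$, i.e.\ that the denominator in~\eqref{eq:approximated_state} is nonzero; this is automatic once $\sum_e{\epsilon'(e)}^2<1$, and otherwise the claimed bound is at least $1$ and the statement carries no content. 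The only remaining ingredient, inequality~\eqref{eq:projection}, is the short spectral computation already asserted in the text, using that $\Pi_{\rho_e}$ projects onto the eigenvectors of $\rho_e^{\otimes n}$ with eigenvalue $\geqq 2^{-\gamma_e}$, that $\rho_e^{\otimes n}$ is the reduced state of $\psi^{\otimes n}$ across $e$, and the definition of $\overline{H}_{\textup{S}}^{{\epsilon'(e)}^2/4}$.
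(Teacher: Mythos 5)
Your proposal is correct and follows essentially the same route as the paper's own proof: apply the noncommutative union bound of Ref.~\cite{pra052331} to the sequence of projections $\Pi_{e_1,\psi},\ldots,\Pi_{e_{N-1},\psi}$ acting on $\psi^{\otimes n}$, feed in the per-edge bound $1-\tr\Pi_{e,\psi}\psi^{\otimes n}\leqq{\epsilon'(e)}^2/4$ from Eq.~\eqref{eq:projection}, and absorb the factor of $2$ into the $1/4$ in the smoothing parameter. Your additional remarks on noncommutativity of the projections, the pure-state fidelity reformulation, and well-definedness of the normalization are sound but not needed beyond what the paper records.
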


As Proposition~\ref{prp:approximation} implies, an efficient distributed algorithm for approximate state construction is the one for exact construction of $\Ket{\tilde\psi_n}$ defined by Eq.~\eqref{eq:approximated_state}.  In our algorithm, parties do not perform the projections on their quantum system, but classically calculate the description of the approximate state according to Eq.~\eqref{eq:approximated_state}. 
The error threshold $\epsilon'$ for the edges is determined so that it minimizes
\[
    \sum_{e\in E}\overline{H}_{\textup{S}}^{{\epsilon'(e)}^2/4}\left(\rho_e^{\otimes n}\right),
\]
within the constraint of
\begin{equation}
\label{eq:constraints_epsilon}
\begin{split}
    &\sqrt{\sum_{e\in E}{\epsilon'(e)}^2}\leqq\epsilon.
\end{split}
\end{equation}
Note that, for any fixed $\rho_e$, $\overline{H}_{\textup{S}}^{{\epsilon'(e)}^2/4}\left(\rho_e^{\otimes n}\right)$ monotonically decreases as $\epsilon'$ increases.
For sufficiently large $n$, the second-order expansion of the quantum information spectrum entropy implies that $\epsilon'$ for efficient approximate state construction is approximately determined when it minimizes the sum of the second-order coefficients
\[
    \sum_{e\in E}b\left(\rho_e,\epsilon'(e)\right)
\]
within the constraint of Eq.~\eqref{eq:constraints_epsilon}.

Therefore, approximate construction of the state $\Ket{\psi}$ under any tree $T$ is achieved by first designating the root, and then by following the order of the rooted tree, the parties performs the Algorithms~\ref{alg:1},~\ref{alg:2},~\ref{alg:3} given in the proof of Theorem~\ref{thm:exact} for exact construction of $\Ket{\tilde\psi_n}$ classically calculated by Eq.~\eqref{eq:approximated_state}.

In calculating $\Ket{\tilde\psi_n}$, improvement of a solution of combinatorial optimization for $\epsilon'$ leads to an efficient distributed algorithm.  However, our result of the lower bound in Theorem~\ref{thm:asymptotic} for approximate state construction is not tight enough to prove its optimality, as we will show through examples in the next subsection.

\subsection{Examples of approximate state construction}
We illustrate examples of the approximate state construction.

We first present approximate construction of the $N$-qubit $W$ states over $N$ parties under a line-topology graph to compare with the case of exact state construction, which requires $E_{\textup{GC},i,e}^T(W) = 1$.   We show how approximate construction consumes less resources per copy by increasing the number of copies $n$ and the resource consumption rate varies in terms of the number of parties $N$.

\begin{figure}
\centering
\includegraphics[width=8cm]{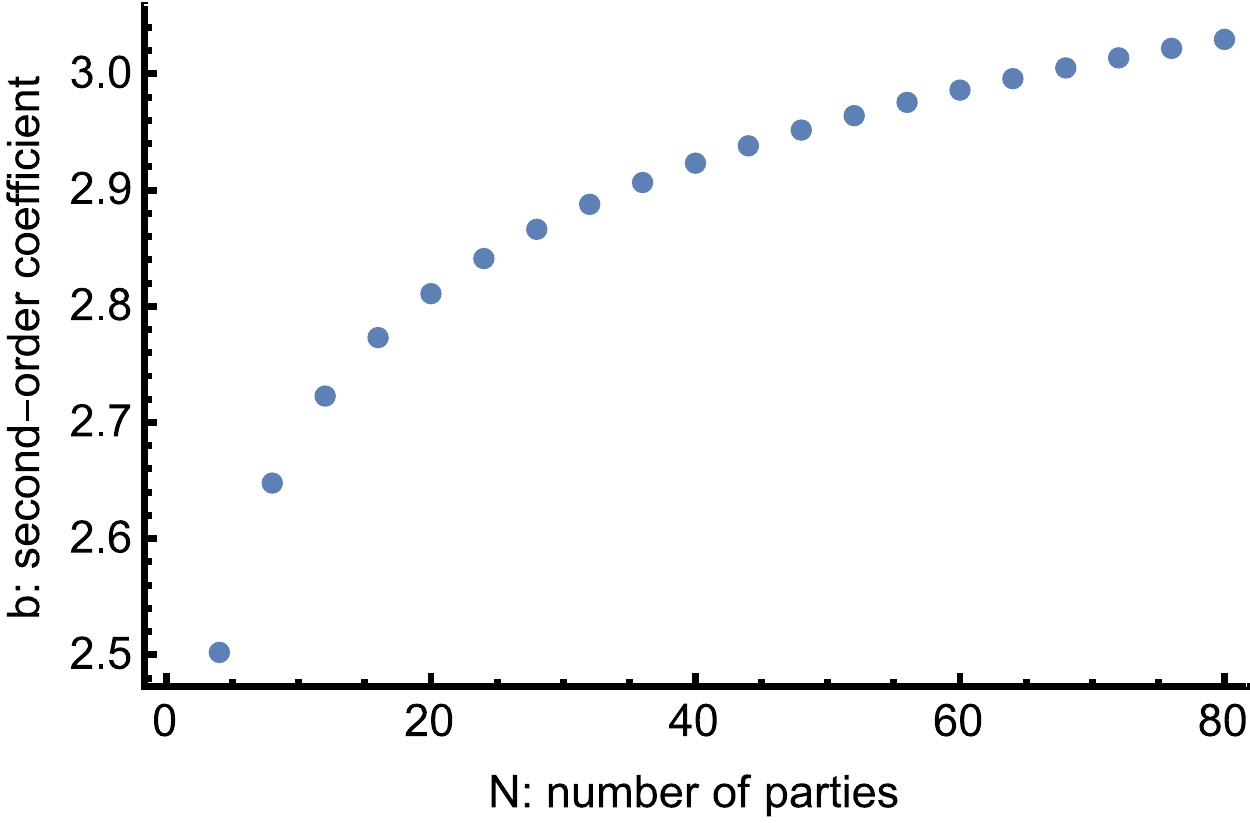}
\caption{The second-order coefficient $b$ of the resource consumption rate at edge $e_{N/4}$ of our approximate state construction algorithm for the $N$-qubit $W$ states over $N$ parties under line-topology graphs for $N\in\{4,8,\ldots,80\}$.   The error thresholds is set to be $\epsilon'(e)={(N-1)}^{-1/2}\epsilon$ for all $e\in E$ with $\epsilon=1/25$. The second-order coefficient increases as $N$ increases, whereas the first-order asymptotic rate is independent of $N$.}
\label{fig:graph3}
\end{figure}

\begin{example}[Approximate construction of the $W$ states]
\label{ex:4}
We consider approximate construction of the $N$-qubit $W$ states over $N$ parties under $G=(V,E)$ where $V=\left\{v_1,v_2,\ldots,v_N\right\}$ and $E=\left\{\{v_i,v_{i+1}\}\colon i\in\{1,2,\ldots,N-1\}\right\}$.
We set $\epsilon=1/25$, which guarantees that the optimal success probability to distinguish the approximately constructed state from the $N$-qubit $W$ state is smaller than $51$\%.
We assume that the error threshold for approximation with respect to each edge is a constant and $\epsilon'$ is given by $\epsilon'={(N-1)}^{-1/2}\epsilon$.  We calculate the resource consumption rate of the initial resource state at edge $e_{N/4}$ achieved by our algorithm for approximate state construction.  The resource consumption rate in our algorithm is expanded up to the second order as
\begin{equation*}
    \begin{split}
        \frac{\overline{H}_{\textup{S}}^{\epsilon'^2/4}\left(\rho_{e_{N/4}}^{\otimes n}\right)}{n}=&a\left(\rho_{e_{N/4}}\right)+b\left(\rho_{e_{N/4}}, \epsilon'\right)\frac{1}{\sqrt{n}}+\\
        &\mathcal{O}\left(\frac{\log n}{n}\right) \,(\textup{as } n\rightarrow\infty).
    \end{split}
\end{equation*}
The second-order coefficient $b\left(\rho_{e_{N/4}}, \epsilon'\right)$ for $N\in\{4,8,\ldots,80\}$ is shown in Fig.~\ref{fig:graph3}.
As illustrated, given a fixed error threshold $\epsilon$ in finite block length regime, more resources are needed for approximate construction of the $N$-qubit $W$ states for increasing $N$, while the first-order term
\[
  a\left(\rho_{e_{N/4}}\right)=0.811\cdots
\] representing the asymptotic rate is constant for any $N$.
\end{example}

We can further improve the resource consumption rate by individually choosing error threshold $\epsilon'(e)$ for each $e\in E$ optimized for given total error threshold $\epsilon$.

\begin{figure}
\centering
\includegraphics[width=8cm]{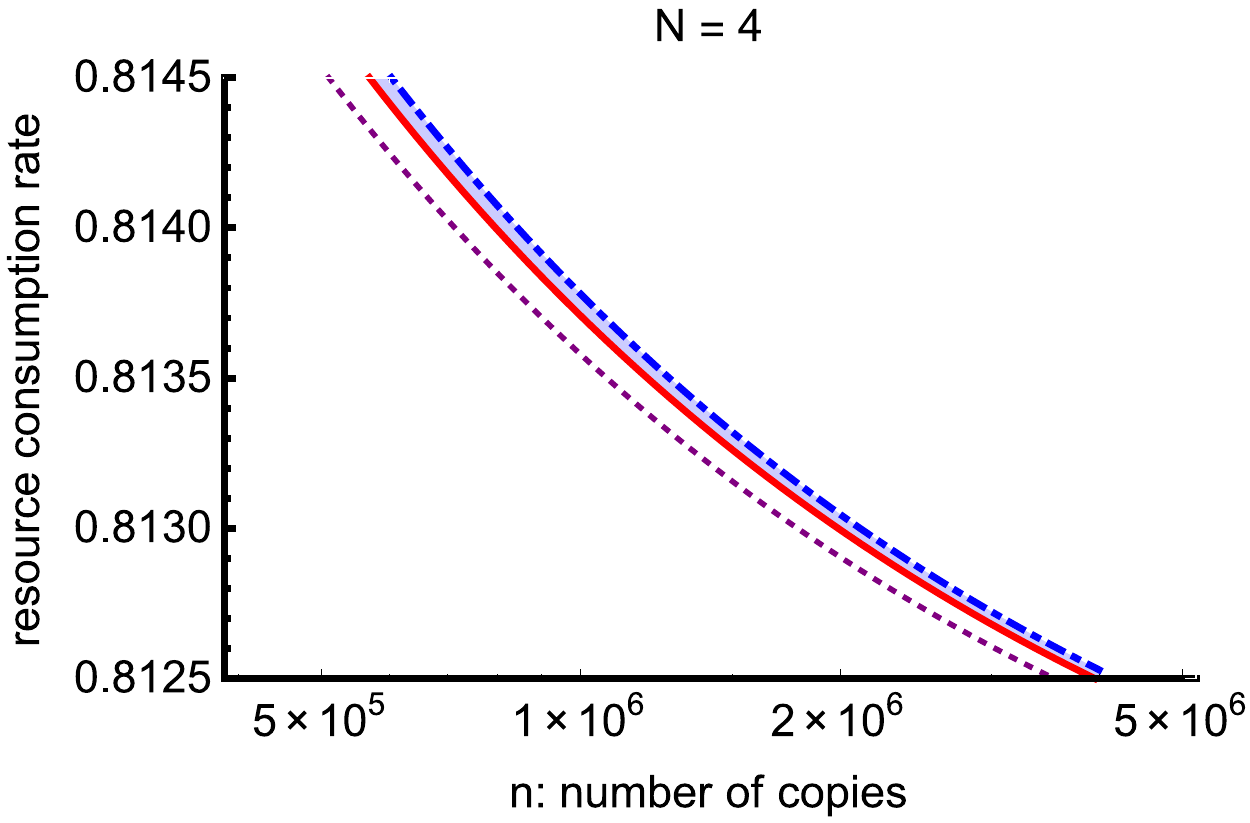}
\caption{Comparison of resource consumption rates at edge $e_1=\{v_1,v_2\}$ up to the second order for approximate construction of the $W$ state under the line-topology graph for $N=4$.
        The blue dotted-dashed curve represents the resource consumption rate $R_{\textup{const}}$ when error thresholds at edges are set constantly, and the red solid curve the rate $R_{\textup{opt}}$ when error thresholds are optimized in order to save the total resource consumption.
        The purple dotted curve represents their lower bound $\underline{R}$ up to the second order.
        Optimization of error thresholds at edges provides an efficient algorithm for approximate state construction.
}
\label{fig:graph4}
\end{figure}

\begin{example}[Improvement of approximate state construction]
  Consider approximate construction of $n$ copies of the four-qubit $W$ state over four parties under the same line-topology graph as the one in Example~\ref{ex:4} (i.e.,\ $N=4$).  In Example~\ref{ex:4}, we took
    \begin{equation}
        \epsilon'(e_1)=\epsilon'(e_2)=\epsilon'(e_3)=3^{-1/2}\epsilon.
        \label{eq:const_error}
    \end{equation}
    However, approximation over the edge $e_2=\{v_2,v_3\}$ does not contribute to saving the total resource consumption, since $b\left(\rho_{e_2},\epsilon'(e_2)\right)$ is zero, i.e.,\ the second term of the second-order expansion of the approximate graph-associated entanglement cost is zero, in this case.
Therefore, by taking
\begin{equation}
    \begin{split}
        \epsilon'\left(e_2\right)&=0,\\
        \epsilon'\left(e_1\right)=\epsilon'\left(e_3\right)&=2^{-1/2}\epsilon,
    \end{split}
    \label{eq:optimized_error}
\end{equation}
we can save the total resource consumption.
Figure~\ref{fig:graph4} illustrates the comparison of the resource consumption rates at $e_1=\{v_1,v_2\}$.
We compare the rate $R_{\textup{const}}$ when using Eq.~\eqref{eq:const_error} as error thresholds and the rate $R_{\textup{opt}}$ when using Eq.~\eqref{eq:optimized_error}, up to the second order, i.e.,\
\begin{equation*}
    \begin{split}
        R_{\textup{const}}(n)&=a\left(\rho_{e_1}\right)+b\left(\rho_{e_1},3^{-1/2}\epsilon\right)\frac{1}{\sqrt{n}},\\
        R_{\textup{opt}}(n)&=a\left(\rho_{e_1}\right)+b\left(\rho_{e_1},2^{-1/2}\epsilon\right)\frac{1}{\sqrt{n}}.
    \end{split}
\end{equation*}
We write their lower bound up to second order as
\[
    \underline{R}(n)=a\left(\rho_{e_1}\right)+b\left(\rho_{e_1},\epsilon\right)\frac{1}{\sqrt{n}}.
\]
While all $R_{\textup{const}}$, $R_{\textup{opt}}$ and $\underline{R}$ converge to the same asymptotic rate $a\left(\rho_{e_1}\right)=0.811\cdots$ as $n\to \infty$,
the second-order asymptotic analysis provides an efficient strategy through optimizing error thresholds for finite $n$.

\end{example}

\section{\label{sec:5}Conclusion}
In this paper, we introduced and analyzed the graph-associated entanglement costs of the exact and approximate state construction tasks for operational characterizations of multipartite entanglement.   The graph-associated entanglement costs represent the optimal resource state consumed at edges of a given graph in the state construction tasks.

We evaluated the edge graph-associated entanglement costs under any tree by presenting efficient distributed algorithms for construction of multipartite pure states.
Our distributed algorithms for state construction can reduce required resources compared to a na\"{\i}ve distribution method of the state in which the root party prepares the whole target state in the beginning and then distributes a corresponding part of the state to each party.  This is because our algorithms construct the target state by ``extending'' the state part by part at each party for the next forward neighboring parties.

The graph-associated entanglement costs incorporate the topology of graphs to characterize multipartite entanglement, in addition to the amount of bipartite entanglement resources at each edge of the graphs.  We found that there is a class of multipartite states where the exact total graph-associated entanglement cost only depends on the number of the parties, i.e., \textit{independent} of the graph topology.  In this class are the $W$ states and GHZ states.  Further topology-based classification should be possible, which we leave as future works.

There are several more issues we would like to point out for future works.   The lower bound of the resource consumption in approximate construction of finite copies of a target state within a fixed error threshold should be further improved.
To improve the lower bound, the analysis of multi-party LOCC may gain importance, as well as that of the combinatorial optimization to present the upper bound explicitly.
The optimal state construction algorithms of multipartite states under an arbitrary graph also remain.

\begin{acknowledgments}
This work is supported by the Project for Developing Innovation Systems of MEXT, Japan, and JSPS
KAKENHI (Grants No.~26330006, No.~15H01677, No.~16H01050, and No.~17H01694). We also acknowledge the ELC project [Grant-in-Aid for Scientific Research on Innovative Areas MEXT KAKENHI (Grant No.~24106009)].
\end{acknowledgments}

\appendix

\section{\label{app:a}Proofs for exact state construction}

We present a proof of Theorem~\ref{thm:exact} by showing that the upper and lower bounds of $E_{\textup{GC},i,e}^T$ coincide.
The upper bound is proved by explicitly presenting a distributed algorithm for the exact state construction, while the lower bound is derived from the analysis for the bipartition with respect to each edge.

\subsection{Proof of the lower bound of Theorem~\ref{thm:exact}} 
Consider a bipartition of the tree $T=(V,E)$ with respect to an edge $e=\{p(v),v\}\in E$ induced by removing an edge $e$, namely, the decomposition of the set $V$ into two subsets $D'_v$ and $\overline{D'_v}$.   Since the resource state for $T$ is given by $\Ket{\Phi_{\textup{res}}(T)}=\bigotimes_{e\in E}\Ket{\Phi_{m_e}^+}^e$ where $m_e$ represents the Schmidt rank of the maximally entangled state corresponding to $e$, the Schmidt rank of $\Ket{\Phi_{\textup{res}}(T)}$ for the bipartition with respect to $e$ is $m_e$.
As the Schmidt rank does not increase under LOCC transformations, the Schmidt rank of the target state $\Ket{\psi}$ for the bipartition with respect to $e$, represented by $R_e$, has to satisfy
\begin{equation}
    R_e \leqq m_e,
\label{eq:necessity_oneshot}
\end{equation}
if $\ket{\psi}$ is transformed from $\Ket{\Phi_{\textup{res}}(T)}$ via LOCC\@.  The inequality~\eqref{eq:necessity_oneshot} holds for any $m_{e}$.  Therefore we obtain
\begin{equation}
    \label{eq:exact_lower}
    E_{\textup{GC},i,e}^T\left(\psi\right)\geqq\log_2 R_e.
\end{equation}

\subsection{Recursive description of multipartite states}

To derive the upper bound, we use the recursive description of multipartite states presented in Proposition~\ref{lem:decomposition_tree}.
\begin{proof}[Proof of Proposition~\ref{lem:decomposition_tree}]
    We prove Eq.~\eqref{eq:w_decomposition}, while Eq.~\eqref{eq:psi_decomposition} can be proved in a similar way.
    Fix $v\in V$ which is not the root or a leaf.   Consider the Schmidt decomposition of the target state $\Ket{\psi}\in\mathcal{H}_t$ with respect to the bipartition induced by removing edge $e=\{p(v),v\}$ given by 
\[
\Ket\psi = \sum_{l_v=0}^{R_e-1}\sqrt{\lambda_{l_v}}\Ket{w_{l_v}}\otimes\Ket{\overline{w}_{l_v}},
\] 
corresponding to Eq.~\eqref{eq:schmidt_decomposition}. Each Schmidt basis state $\Ket{w_{l_v}} \in \mathcal{H}^{D'_v}_t$ can be expanded as
\begin{equation}
\label{eq:w_to_l_and_q}
    \Ket{w_{l_v}} = \sum_{l}\alpha_{l,l_v}^{v}\Ket{l}\otimes\Ket{q_{l, l_v}},
\end{equation}
where ${\left\{\Ket{l}\right\}}_l$ is the computational basis of $\mathcal{H}^v_t$ and ${\{ \Ket{q_{l, l_v}}\}}_k$ is a set of normalized but not necessary orthogonal states of $\mathcal{H}^{D_v}_t$.   Thus it is sufficient to prove that $\Ket{q_{l, l_v}}$ can be written as
    \begin{equation}
        \Ket{q_{l, l_v}} =
        \sum_{\boldsymbol{l}_{C_v}}
        \beta_{l,\boldsymbol{l}_{C_v},l_v}^{v}  \bigotimes_{c\in C_v}\Ket{w_{l_c}}
        \label{eq:q_l_m_v_k}
    \end{equation}
    with the Schmidt basis state $\Ket{w_{l_c}}$ of $\mathcal{H}^{D'_c}_t$ for $c \in C_v$.

    Consider, for each $c\in C_v$, any $\Ket{w_c^{\perp}}\in\mathcal{H}^{D'_c}_t$ such that
\[
    \forall l_c\in\left\{0,1,\ldots,R_{\left\{v,c\right\}}-1\right\}, \langle w_c^{\perp}|w_{l_c}\rangle = 0.
\]
    It holds that
    \[
        \left(\Bra{w_c^{\perp}}\otimes \openone\right)\Ket{\psi} = 0,
    \]
    since otherwise it contradicts $\langle w_c^{\perp}|\left(\tr_{\overline{D'_c}}\psi\right)|w_c^{\perp}\rangle=0$.
    It is sufficient to prove that $\Ket{q_{l, l_v}}$ can be chosen so that, for any $\Ket{w_c^{\perp}}$, it holds that
    \[
        \left(\Bra{w_c^{\perp}}\otimes \openone\right)\Ket{q_{l, l_v}} = 0.
    \]

A straightforward calculation shows that
    \[
        \begin{split}
            0 =& \left(\Bra{w_c^{\perp}}\otimes \openone\right)\Ket{\psi} \\
            =& \sum_{l_v}\sqrt{\lambda_{l_v}}\left(\sum_{l}\alpha^{v}_{l,l_v}\Ket{l}\otimes 
                \left(\Bra{w_c^{\perp}}\otimes \openone\right)\Ket{q_{l, l_v}}\right)\otimes
            \Ket{\overline{w_{l_v}}}.
        \end{split}
    \]
    Since ${\{ \Ket{l}\otimes\Ket{\overline{w}_{l_v}} \}}_{l, l_v}$ is a set of mutually linearly independent states and $\sqrt{\lambda_{l_v}}>0$, it holds that $\alpha^{v}_{l,l_v}=0$ or 
\begin{equation}
        \left(\Bra{w_c^{\perp}}\otimes \openone\right)\Ket{q_{l, l_v}}=0.
\label{eqn:orthogonal}
\end{equation}
For the case $\alpha^{v}_{l,l_v}=0$, the corresponding $\Ket{q_{l, l_v}}$ can be chosen arbitrarily since $\Ket{\psi} = \sum_{l,l_v}\sqrt{\lambda_{l_v}}\alpha^{v}_{l,l_v}\Ket{l}\otimes \Ket{q_{l, l_v}}\otimes
\Ket{\overline{w_{l_v}}}$.  Therefore, in any case, $\Ket{q_{l, l_v}}$ can be chosen to satisfy Eq.~\eqref{eqn:orthogonal}.

    In conclusion, for any $v\in V$ which is not the root or a leaf, we obtain
    \begin{align*}
        &\Ket{w_{l_v}} = \sum_{l}\alpha^{v}_{l,l_v} \Ket{l}  \otimes 
        \sum_{\boldsymbol{l}_{C_v}}
        \beta_{l,\boldsymbol{l}_{C_v},l_v}^{v}  \bigotimes_{c\in C_{v}}\Ket{w_{l_c}}.
    \end{align*}
    To show~Eq.~\eqref{eq:psi_decomposition}, replace $v_1$ by $v$, remove $l_v$ in Eqs.~\eqref{eq:w_to_l_and_q} and~\eqref{eq:q_l_m_v_k}, and follow the above argument for the proof to obtain
    \begin{align*}
    \Ket{\psi} = \sum_{l}\alpha^{v_1}_l \Ket{l}  \otimes 
    \sum_{\boldsymbol{l}_{C_{v_1}}}
    \beta_{l,\boldsymbol{l}_{C_{v_1}}}^{v_1}  \bigotimes_{c\in C_{v_1}}\Ket{w_{l_c}}.
 \end{align*}

\end{proof}

\subsection{Completeness of measurements}
    In the distributed algorithm for exact state construction based on Proposition~\ref{lem:decomposition_tree}, the measurements represented by Eqs.~\eqref{eq:POVM_root} and~\eqref{eq:POVM_each} are used.
    We show the completeness of the measurements.
    \begin{proposition}[Completeness of the measurements for exact state construction]
        For all nonleaf $v\in V$,
\begin{equation*}
            \sum_{\boldsymbol{x}_{C_v},\boldsymbol{z}_{C_v}}M^{v\dag}_{\boldsymbol{x}_{C_v},\boldsymbol{z}_{C_v}} M^{v}_{\boldsymbol{x}_{C_v},\boldsymbol{z}_{C_v}}=\textup{\openone}.
\end{equation*}
\end{proposition}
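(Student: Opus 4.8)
The plan is to treat the root $v_1$ and any other nonleaf vertex $v\neq v_1$ uniformly, by factoring each measurement operator into an outcome-independent part times a tensor product of Heisenberg--Weyl operators supported only on the \emph{outgoing} resource systems, then collapsing the sum over the outcome labels with the Heisenberg--Weyl twirl, and finally identifying the surviving operator with the Gram matrix of the Schmidt basis states of Proposition~\ref{lem:decomposition_tree}.

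First I would rewrite Eqs.~\eqref{eq:POVM_root} and~\eqref{eq:POVM_each} as $M^{v}_{\boldsymbol{x}_{C_v},\boldsymbol{z}_{C_v}}=W^{v}\,P_{\boldsymbol{x}_{C_v},\boldsymbol{z}_{C_v}}$, where the outcome-\emph{independent} factor is
\[
 W^{v}=\sum_{l_v,l}\alpha^{v}_{l,l_v}\Ket{l}\Bra{l_v}\otimes\sum_{\boldsymbol{l}_{C_v}}\beta^{v}_{l,\boldsymbol{l}_{C_v},l_v}\bigotimes_{c\in C_v}\Bra{l_c}
\]
(and the analogous expression with no $l_v$, built from $\alpha^{v_1}_l$ and $\beta^{v_1}_{l,\boldsymbol{l}_{C_{v_1}}}$, for the root), and
\[
 P_{\boldsymbol{x}_{C_v},\boldsymbol{z}_{C_v}}=\openone_{\mathcal{H}^{v}_{r_{\{p(v),v\}}}}\otimes\bigotimes_{c\in C_v}\frac{Z(z_c)X(x_c)}{\sqrt{R_{\{v,c\}}}}
\]
(the incoming-system factor being absent when $v=v_1$). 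Since $W^{v}$ does not depend on the outcomes,
\[
 \sum_{\boldsymbol{x}_{C_v},\boldsymbol{z}_{C_v}}M^{v\dag}_{\boldsymbol{x}_{C_v},\boldsymbol{z}_{C_v}}M^{v}_{\boldsymbol{x}_{C_v},\boldsymbol{z}_{C_v}}=\sum_{\boldsymbol{x}_{C_v},\boldsymbol{z}_{C_v}}P^{\dag}_{\boldsymbol{x}_{C_v},\boldsymbol{z}_{C_v}}\bigl(W^{v\dag}W^{v}\bigr)P_{\boldsymbol{x}_{C_v},\boldsymbol{z}_{C_v}},
\]
so all the outcome dependence is confined to a twirl over each outgoing edge.

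Next I would invoke the Heisenberg--Weyl twirl identity: for any operator $O$ on a $d$-dimensional space,
\[
 \frac1d\sum_{x=0}^{d-1}\sum_{z=0}^{d-1}X(x)^{\dag}Z(z)^{\dag}\,O\,Z(z)X(x)=(\tr O)\,\openone,
\]
which follows in a line from $Z(z)X(x)\Ket{k}=e^{2\pi i z(k+x)/d}\Ket{k+x}$ (Eqs.~\eqref{eq:HWopx}--\eqref{eq:HWopz}) and $\sum_z e^{2\pi i z m/d}=d\,\delta_{m,0}$. Expanding $W^{v\dag}W^{v}$ in the computational bases, the contractions $\langle l'|l\rangle=\delta_{l,l'}$ collapse the target index, and applying the twirl edge by edge (with $d=R_{\{v,c\}}$ and $O=\Ket{l'_c}\Bra{l_c}$, so $\bigotimes_c\Ket{l'_c}\Bra{l_c}\mapsto\bigl(\prod_c\delta_{l_c,l'_c}\bigr)\openone$) eliminates every cross term in $\boldsymbol{l}_{C_v}$. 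What remains is $\bigl(\sum_{l_v,l'_v}c_{l_v,l'_v}\Ket{l'_v}\Bra{l_v}\bigr)\otimes\openone$ on $\mathcal{H}^{v}_{r}$ for $v\neq v_1$, with
\[
 c_{l_v,l'_v}=\sum_{l}\overline{\alpha^{v}_{l,l'_v}}\,\alpha^{v}_{l,l_v}\sum_{\boldsymbol{l}_{C_v}}\overline{\beta^{v}_{l,\boldsymbol{l}_{C_v},l'_v}}\,\beta^{v}_{l,\boldsymbol{l}_{C_v},l_v},
\]
and the scalar $\bigl(\sum_{l}|\alpha^{v_1}_{l}|^{2}\sum_{\boldsymbol{l}_{C_{v_1}}}|\beta^{v_1}_{l,\boldsymbol{l}_{C_{v_1}}}|^{2}\bigr)\openone$ for $v=v_1$.

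Finally I would recognize these quantities as overlaps of Schmidt basis states. From the proof of Proposition~\ref{lem:decomposition_tree}, $\Ket{w_{l_v}}=\sum_l\alpha^{v}_{l,l_v}\Ket{l}\otimes\Ket{q_{l,l_v}}$ with $\Ket{q_{l,l_v}}=\sum_{\boldsymbol{l}_{C_v}}\beta^{v}_{l,\boldsymbol{l}_{C_v},l_v}\bigotimes_{c\in C_v}\Ket{w_{l_c}}$, and likewise $\Ket{\psi}=\sum_l\alpha^{v_1}_l\Ket{l}\otimes\Ket{q_l}$ for the root. Orthonormality of $\{\Ket{l}\}_l$ and of each $\{\Ket{w_{l_c}}\}_{l_c}$ then gives $c_{l_v,l'_v}=\langle w_{l'_v}|w_{l_v}\rangle=\delta_{l_v,l'_v}$ since $\{\Ket{w_{l_v}}\}_{l_v}$ is a Schmidt basis, whence $\sum_{l_v,l'_v}c_{l_v,l'_v}\Ket{l'_v}\Bra{l_v}=\openone$ on $\mathcal{H}^{v}_{r_{\{p(v),v\}}}$ and the completeness relation follows for $v\neq v_1$; for the root the same orthonormality gives $\sum_{\boldsymbol{l}_{C_{v_1}}}|\beta^{v_1}_{l,\boldsymbol{l}_{C_{v_1}}}|^{2}=\langle q_l|q_l\rangle=1$ for each $l$ and then $\sum_l|\alpha^{v_1}_l|^{2}=\langle\psi|\psi\rangle=1$ (the $\Ket{l}\otimes\Ket{q_l}$ being orthogonal in $l$), so the scalar is $1$ and $\sum M^{v_1\dag}M^{v_1}=\openone$ as well. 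The main obstacle is purely organizational: keeping straight which tensor factor---the incoming resource system $\mathcal{H}^{v}_{r_{\{p(v),v\}}}$, the outgoing resource systems $\mathcal{H}^{v}_{r_{\{v,c\}}}$, the target system $\mathcal{H}^{v}_{t}$, or the descendant spaces $\mathcal{H}^{D'_c}_{t}$---each bra and ket lives on, so that the twirl is applied to exactly the outgoing-edge factors; once the factorization $M^{v}=W^{v}P$ is in hand, the remaining manipulations are short.
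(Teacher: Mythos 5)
Your proposal is correct and follows essentially the same route as the paper's proof: both reduce the sum over outcomes to the completely depolarizing (Heisenberg--Weyl twirl) identity on each outgoing resource system, and then identify the surviving coefficient matrix with the Gram matrix $\langle w_{l'_v}|w_{l_v}\rangle=\delta_{l_v,l'_v}$ of the Schmidt basis from Proposition~\ref{lem:decomposition_tree}. The factorization $M^v=W^vP_{\boldsymbol{x}_{C_v},\boldsymbol{z}_{C_v}}$ is a tidy organizational device but not a different argument.
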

    \begin{proof}
        We prove the case for nonroot  (and nonleaf) $v$, while the case for the root can be proved by removing indices $l_v$ and $l'_v$ in the following argument.

        For generalized Pauli operators $X(x)$ and $Z(z)$ of a $d$-dimensional Hilbert space $\mathcal{H}_d$ defined by Eqs.~\eqref{eq:HWopx} and~\eqref{eq:HWopz},  a channel $\rho \rightarrow (1/d^2)\sum_{x, z=0}^{d-1}{X(x)}^\dag {Z(z)}^\dag\rho Z(z)X(x)$ for any state $\rho$ on $\mathcal{H}_d$ is a completely depolarizing channel, namely, $(1/d^2)\sum_{x, z=0}^{d-1} {X(x)}^\dag {Z(z)}^\dag\rho Z(z)X(x)= \openone /d$.
Then we have
        \begin{equation*}
            \begin{split}
                &\sum_{\boldsymbol{x}_{C_v},\boldsymbol{z}_{C_v}}M^{v\dag}_{\boldsymbol{x}_{C_v},\boldsymbol{z}_{C_v}} M^{v}_{\boldsymbol{x}_{C_v},\boldsymbol{z}_{C_v}}=\\
                &\sum_{l_v,l'_v}\sum_{l}\overline{\alpha^{v}_{l,l'_v}}\alpha^{v}_{l,l_v}\Ket{l'_v}\Bra{l_v}\otimes\\
                &\sum_{\boldsymbol{l}_{C_v},\boldsymbol{l'}_{C_v}}\overline{\beta_{l,\boldsymbol{l'}_{C_v},l'_v}^{v}}\beta_{l,\boldsymbol{l}_{C_v},l_v}^{v}
    \bigotimes_{c\in C_v}\openone.
\end{split}
\end{equation*}
On the other hand, orthogonality of the Schmidt basis $\left\{\Ket{w_{l_v}}\right\} _{l_v}$ yields
\begin{equation*}
            \begin{split}
                \sum_{l}\sum_{\boldsymbol{l}_{C_v},\boldsymbol{l'}_{C_v}}\overline{\alpha^{v}_{l,l'_v}}\alpha^{v}_{l,l_v}
                \overline{\beta_{l,\boldsymbol{l'}_{C_v},l'_v}^{v}}\beta_{l,\boldsymbol{l}_{C_v},l_v}^{v} &= \langle w_{l'_v} | w_{l_v}\rangle\\
                &= \delta_{l'_v,l_v}.
\end{split}
    \end{equation*}
    Therefore, we obtain
        \begin{equation*}
            \sum_{\boldsymbol{x}_{C_v},\boldsymbol{z}_{C_v}}M^{v\dag}_{\boldsymbol{x}_{C_v},\boldsymbol{z}_{C_v}} M^{v}_{\boldsymbol{x}_{C_v},\boldsymbol{z}_{C_v}}=\sum_{l_v}\Ket{l_v}\Bra{l_v}\otimes\bigotimes_{c\in C_{v}}\openone=\openone.
\end{equation*}
    \end{proof}

\subsection{Distributed algorithm for exact state construction}
We show a distributed algorithm consisting of three sub-algorithms (Algorithm~\ref{alg:1}, Algorithm~\ref{alg:2} and Algorithm~\ref{alg:3}) presented in Table~I for constructing any pure target state $\Ket{\psi}$ from the optimal initial resource state $\ket{\Phi_{\textup{res}}(T)}=\bigotimes_{e\in E}\Ket{\Phi^+_{R_e}}^e$ for tree $T=(V,E)$, which yields the upper bound of $E^T_{\textup{GC},i,e}$ of Theorem~\ref{thm:exact}.

\begin{table}[H]
\label{algorithmtable}
\caption{Three sub-algorithms consisting of the distributed algorithm for exact construction of a target state $\Ket{\psi}$ from the optimal initial resource state $\ket{\Phi_{\textup{res}}(T)}=\bigotimes_{e\in E}\Ket{\Phi^+_{R_e}}^e$ for tree $T=(V,E)$.}
\begin{algorithm}[H]
    \caption{\label{alg:1}The algorithm $\mathcal{A}^{v_1}$ for the root $v_1\in V$}
    \begin{algorithmic}[1]
        \ForAll{$c\in C_{v_1}$}
        \State{Calculate the Schmidt basis $\left\{\Ket{w_{l_c}}\right\} _{l_c}$ of $\mathcal{H}_t^{D'_c}$ in \par Eq.~\eqref{eq:psi_decomposition}.}
        \State{Fix a computational basis $\left\{\Ket{l_c}\right\} _{l_c}$ of $\mathcal{H}^{v_1}_{r_{\left\{v_1,c\right\}}}$ in {\par}Eq.~\eqref{eq:POVM_root}.}
        \State{Send to $c$ the classical description of the above bases.}
    \EndFor{}
    \State{Fix a computational basis $\left\{\Ket{l}\right\} _{l}$ of $\mathcal{H}_{t}^{v_1}$.}
    \State{Calculate $M^{v_1}_{\boldsymbol{x}_{C_{v_1}},\boldsymbol{z}_{C_{v_1}}}$ in Eq.~\eqref{eq:POVM_root}.}
    \State{Perform on $\mathcal{H}^{v_1}_r$ a measurement according to Eq.~\eqref{eq:POVM_root}.}
    \Comment{This consumes the initial resource state shared between $v_1$ and each child $c\in C_{v_1}$.}
    \ForAll{$c\in C_{v_1}$}
    \State{Send to $c$ the measurement outcome $\left(x_c,z_c\right)$.}
\EndFor{}
\end{algorithmic}
\end{algorithm}
\begin{algorithm}[H]
    \caption{\label{alg:2}The algorithm $\mathcal{A}^{v}$ for any $v\in V$ which is not the root or a leaf.}
\begin{algorithmic}[1]
    \State{Receive from the parent $p(v)$ the measurement outcome $\left(x_v,z_v\right)$ and the description of $\left\{\Ket{w_{l_v}}\right\} _{l_v}$ and $\left\{\Ket{l_v}\right\} _{l_v}$.}
    \ForAll{$c\in C_v$}
    \State{Calculate the Schmidt basis $\left\{\Ket{w_{l_c}}\right\} _{l_c}$ of $\mathcal{H}^{D'_c}_{t}$ in \par Eq.~\eqref{eq:w_decomposition}.}
    \State{Fix a computational basis $\left\{\Ket{l_c}\right\} _{l_c}$ of $\mathcal{H}^v_{r_{\{v,c\}}}$ in \par Eq.~\eqref{eq:POVM_each}.}
    \State{Send to $c$ the classical description of the above bases.}
    \EndFor{}
    \State{Fix a computational basis $\left\{\Ket{l}\right\} _{l}$ of $\mathcal{H}_{t}^{v}$.}
    \State{Calculate $M^{v}_{\boldsymbol{x}_{C_v},\boldsymbol{z}_{C_v}}$ in Eq.~\eqref{eq:POVM_each}.}
    \State{Apply the correction $Z\left(-z_v\right)X\left(x_v\right)$ on $\mathcal{H}^v_{r_{\{p(v),v\}}}$.}
    \State{Perform on $\mathcal{H}^{v}_r$ a measurement according to Eq.~\eqref{eq:POVM_each}.}
    \Comment{This consumes the initial resource state shared between $v$ and each child $c\in C_v$.}
    \ForAll{$c\in C_v$}
    \State{Send to $c$ the measurement outcome $\left(x_c,z_c\right)$.}
\EndFor{}
\end{algorithmic}
\end{algorithm}
\begin{algorithm}[H]
    \caption{\label{alg:3}The algorithm $\mathcal{A}^{v}$ for any leaf $v\in V$.}
\begin{algorithmic}[1]
    \State{Receive from the parent $p(v)$ the measurement outcome $\left(x_v,z_v\right)$ and the description of $\left\{\Ket{w_{l_v}}\right\} _{l_v}$ and $\left\{\Ket{l_v}\right\} _{l_v}$.}
    \State{Apply the correction $Z\left(-z_v\right)X\left(x_v\right)$ on $\mathcal{H}^v_r$.}
    \State{Perform on $\mathcal{H}_{r}^{v}$ the isometry transformation $U_{v}\colon\mathcal{H}_{r}^{v}\rightarrow\mathcal{H}_{t}^{v}$ which transforms the basis as
        \begin{align*}
            \Ket{l_v}\xrightarrow{U_v}\Ket{w_{l_v}}.
        \end{align*}}
\end{algorithmic}
\end{algorithm}
\end{table}

\begin{proof}[Proof of the upper bound of Theorem~\ref{thm:exact}]
First, a vertex on $T$ is chosen as the root $v_1 \in V$ of $T$.   In case assigning the root is a nontrivial task, employ classical processing based on the leader election protocols~\cite{RefWorks:142}.  Any vertex can be chosen as the root of the tree.
Consider that the party at the root $v_1 \in V$ performs Algorithm~\ref{alg:1}, then the parties at the vertex $v\in V$ which is not the root or a leaf perform Algorithm~\ref{alg:2}, and finally the parties at the leaf $v_{\textup{leaf}}\in V$ performs Algorithm~\ref{alg:3}.
After all the algorithms terminated on all the vertices, the target state $\Ket{\psi}$ has been deterministically constructed and shared among the parties.
The existence of the distributed algorithm yields
\begin{equation}
    \label{eq:exact_upper}
    \sum_{e\in E}E_{\textup{GC},i,e}^T\left(\psi\right)\leqq\sum_{e\in E}\log_2 R_e.
\end{equation}
\end{proof}

Therefore the lower bound of $E_{\textup{GC},i,e}^T$ given by Eq.~\eqref{eq:exact_lower} and the upper bound of $E_{\textup{GC},i,e}^T$  given by Eq.~\eqref{eq:exact_upper} are shown to coincide and we have proven that
\[
    E_{\textup{GC},i,e}^T\left(\psi\right)=\log_2 R_e,
\]
where $i$ is uniquely determined.

\section{\label{app:demo} Measurement operators for construction of matrix product states}

We show an explicit description of the measurement operators for constructing a target state given by the canonical form of matrix product states (MPS).  We consider construction of an MPS which is an $N$-partite state $\Ket\psi$ under a line-topology graph $G=(V,E)$, where we designate a vertex at an end of the line topology as the root $v_1\in V$ and label vertices and edges as $V=\left\{v_1,v_2,\ldots,v_N\right\}$, $E=\left\{e_k = \{v_k,v_{k+1}\}\colon k\in\{1,2,\ldots,N-1\}\right\}$.  The order of parties performing measurement in our algorithm is determined as $v_1,v_2,\ldots,v_N$.

The canonical form of MPS describing $\Ket{\psi}$ is given by
\begin{equation}
    \begin{split}
        \Ket\psi =& \sum_{i_1,i_2,\ldots,i_N}\sum_{\alpha_1,\alpha_2,\ldots,\alpha_{N-1}}
        \left(\Gamma_{\alpha_1}^{[1]i_1}\lambda_{\alpha_1}^{[1]}\Gamma_{\alpha_1 \alpha_2}^{[2]i_2}\lambda_{\alpha_2}^{[2]}\cdots\right.\\
        &\left.\lambda_{\alpha_{N-1}}^{[N-1]}\Gamma_{\alpha_{N-1}}^{[N]i_N}\right)\Ket{i_1}^{1}\otimes\Ket{i_2}^{2}\otimes\cdots\otimes\Ket{i_N}^{N},
    \end{split}
    \label{eq:canonical}
\end{equation}
where the sum is over all the possible values of the indices, ${\{\Ket{i_k}\}}_{i_k}$ for each $k\in\{1,2,\ldots,N\}$ is a fixed computational basis of $\mathcal{H}^{v_k}_t$, $\Gamma^{[1]}$ is a rank-$2$ tensor, $\lambda^{[1]}$ rank-$1$, $\Gamma^{[2]}$ rank-$3$, and so on.
Given a state $\Ket\psi$, its canonical form can be obtained from the following procedure~\cite{APractical}.
Firstly, consider a Schmidt decomposition of $\Ket\psi$ for the bipartition with respect to $e_1$, i.e.\
\[
\Ket\psi = \sum_{\alpha_1}\sqrt{\lambda_{\alpha_1}}\Ket{w_{\alpha_1}}\otimes\Ket{\overline{w_{\alpha_1}}},
\]
where $\sqrt{\lambda_{\alpha_1}}$ for each $\alpha_1$ is the Schmidt coefficient, and ${\left\{\Ket{w_{\alpha_1}}\right\}}_{\alpha_1}$ and ${\left\{\Ket{\overline{w_{\alpha_1}}}\right\}}_{\alpha_1}$ are the Schmidt basis of $\mathcal{H}^{D'_{v_2}}_t$ and $\mathcal{H}^{v_1}_t$ respectively.
Here, choose $\Gamma^{[1]}$ and $\lambda^{[1]}$ so that, for all $\alpha_1$, they satisfy
\begin{align*}
    \Ket{\overline{w_{\alpha_1}}} &= \sum_{i_1}\Gamma_{\alpha_1}^{[1]i_1}\Ket{i_1},\\
    \sqrt{\lambda_{\alpha_1}} &= \lambda_{\alpha_1}^{[1]}.
\end{align*}
The target state $\Ket\psi$ is written as
\[
    \Ket\psi = \sum_{i_1}\sum_{\alpha_1}\Gamma_{\alpha_1}^{[1]i_1}\lambda_{\alpha_1}^{[1]}\Ket{i_1}^{1}\otimes\Ket{w_{\alpha_1}}.
\]
Secondly, consider another Schmidt decomposition of $\Ket\psi$ for the bipartition with respect to $e_2$, i.e.\
\[
\Ket\psi = \sum_{\alpha_2}\sqrt{\lambda_{\alpha_2}}\Ket{w_{\alpha_2}}\otimes\Ket{\overline{w_{\alpha_2}}},
\]
where $\sqrt{\lambda_{\alpha_2}}$ is the Schmidt coefficient, and ${\left\{\Ket{w_{\alpha_2}}\right\}}_{\alpha_2}$ and ${\left\{\Ket{\overline{w_{\alpha_2}}}\right\}}_{\alpha_2}$ are the Schmidt basis of $\mathcal{H}^{D'_{v_3}}_t$ and $\mathcal{H}^{\overline{D'_{v_3}}}_t$ respectively.
Then, calculate $\Gamma^{[2]}$ and $\lambda^{[2]}$ so that, for all $\alpha_1, \alpha_2$, they satisfy the following decomposition of the Schmidt basis $\Ket{w_{\alpha_1}}$ of $\mathcal{H}^{D'_{v_2}}_t=\mathcal{H}^{v_2}_t\otimes\mathcal{H}^{D'_{v_3}}_t$,
\begin{align*}
    \Ket{w_{\alpha_1}} &= \sum_{i_2}\sum_{\alpha_2}\Gamma_{\alpha_1 \alpha_2}^{[2]i_2}\lambda_{\alpha_2}^{[2]}\Ket{i_2}\otimes\Ket{w_{\alpha_2}},\\
    \sqrt{\lambda_{\alpha_2}} &= \lambda_{\alpha_2}^{[2]}.
\end{align*}
Therefore, $\Ket\psi$ is written as
\[
    \Ket\psi = \sum_{i_1,i_2}\sum_{\alpha_1,\alpha_2}\Gamma_{\alpha_1}^{[1]i_1}\lambda_{\alpha_1}^{[1]}\Gamma_{\alpha_1 \alpha_2}^{[2]i_2}\lambda_{\alpha_2}^{[2]}\Ket{i_1}^{1}\otimes\Ket{i_2}^{2}\otimes\Ket{w_{\alpha_2}}.
\]
Iterating recursively the above yields
\begin{equation*}
    \begin{split}
        \Ket\psi =& \sum_{i_1,i_2,\ldots,i_N}\sum_{\alpha_1,\alpha_2,\ldots,\alpha_{N-1}}
        \left(\Gamma_{\alpha_1}^{[1]i_1}\lambda_{\alpha_1}^{[1]}\Gamma_{\alpha_1 \alpha_2}^{[2]i_2}\lambda_{\alpha_2}^{[2]}\cdots\right.\\
        &\left.\lambda_{\alpha_{N-1}}^{[N-1]}\right)\Ket{i_1}^{1}\otimes\cdots\otimes\Ket{i_{N-1}}^{N-1}\otimes\Ket{w_{\alpha_{N-1}}},
    \end{split}
\end{equation*}
where ${\left\{\Ket{w_{\alpha_{N-1}}}\right\}}_{\alpha_{N-1}}$ is the Schmidt basis of $\mathcal{H}^{v_N}_t$.
Finally, take $\Gamma^{[N]}$ so that it satisfies
\[
    \Ket{w_{\alpha_{N-1}}}=\sum_{i_N}\Gamma_{\alpha_{N-1}}^{[N]i_N}\Ket{i_N},
\]
to obtain Eq.~\eqref{eq:canonical}.

Comparing with the notation in Proposition~\ref{lem:decomposition_tree}, we obtain
\begin{align*}
    \alpha^{v_1}_l\beta_{l,l_{v_2}}^{v_1}&=\Gamma_{l_{v_2}}^{[1]l}\lambda_{l_{v_2}}^{[1]},\\
    \alpha^{v_k}_{l,l_{v_k}}\beta_{l,l_{v_{k+1}},l_{v_k}}^{v_k}&=\Gamma_{l_{v_k}l_{v_{k+1}}}^{[k]l}\lambda_{l_{v_{k+1}}}^{[k]},\\
\end{align*}
for $k=2,3,\ldots,N-1$.
Therefore, the measurements represented by Eqs.~\eqref{eq:POVM_root} and~\eqref{eq:POVM_each} are written as
\begin{align*}
    M^{v_1}_{x_{v_2},z_{v_2}}=& \sum_{l,l_{v_2}}\Gamma_{l_{v_2}}^{[1]l}\lambda_{l_{v_2}}^{[1]}\\
                              & \Ket{l}\left(\Bra{l_{v_2}} Z(z_{v_2}) X(x_{v_2})/\sqrt{R_{\{v_1,v_2\}}}\right),\\
    M^{v_k}_{x_{v_{k+1}},z_{v_{k+1}}}=& \sum_{l,l_{v_k},l_{v_{k+1}}}\Gamma_{l_{v_k}l_{v_{k+1}}}^{[k]l}\lambda_{l_{v_{k+1}}}^{[k]} \Ket{l}
    \Bra{l_{v_k}} \otimes\\
    &\left(\Bra{l_{v_{k+1}}}Z(z_{v_{k+1}}) X(x_{v_{k+1}})/\sqrt{R_{\{v_k,v_{k+1}\}}}\right),
\end{align*}
for $k=2,3,\ldots,N-1$.
The isometry transformation performed by the leaf $v_N$ is
\[
    U_{v_N}=\sum_{l,l_{v_N}}\Gamma_{l_{v_N}}^{[N]l}\Ket{l}\Bra{l_{v_N}},
\]
as the right-hand side describes the change of basis from $\Ket{l_{v_N}}\in\mathcal{H}^{v_N}_r$ into the Schmidt basis $\Ket{w_{l_{v_N}}}=\sum_{l}\Gamma_{l_{v_N}}^{[N]l}\Ket{l}\in\mathcal{H}^{v_N}_t$.

\section{\label{app:b}Proofs for approximate state construction}
To prove the upper bound given in Theorem~\ref{thm:asymptotic}, we first prove Proposition~\ref{prp:approximation}.  Then the proofs of the upper bound and the lower bound in Theorem~\ref{thm:asymptotic} are presented followed by the proof of Corollary~\ref{cor:asymptotic_cost}.

\subsection{Proof of Proposition 10}
Proposition~\ref{prp:approximation} is proven by using the noncommutative union bound for sequential projections presented 
in Ref.~\cite{pra052331} stated as the following Lemma.
\begin{lemma}[Noncommutative union bound for sequential projections]
\label{lem:union_bound}
    Given any density operator $\rho$ and projections $\Pi_1,\Pi_2,\ldots,\Pi_{N-1}$ satisfying, for each $i\in\{1,2,\ldots,N-1\}$,
    \[
        \tr\Pi_i\rho=1-\epsilon_i,
    \]
    it holds that
    \[
        \left\|\rho-\frac{\Pi_{N-1}\cdots\Pi_2\Pi_1\rho\Pi_1\Pi_2\cdots\Pi_{N-1}}{\tr\Pi_{N-1}\cdots\Pi_2\Pi_1\rho\Pi_1\Pi_2\cdots\Pi_{N-1}}\right\|_1\leqq 2\sqrt{\sum_i \epsilon_i}.
    \]
\end{lemma}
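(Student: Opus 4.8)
The plan is to reduce the statement to the case of a pure input state and then extract the estimate from a telescoping decomposition of $\openone-\Pi_{N-1}\cdots\Pi_1$ together with a single Cauchy--Schwarz step, reproducing the argument of Ref.~\cite{pra052331}. First I would observe that it suffices to treat $\rho=\ket\psi\bra\psi$ pure: for a general $\rho$ one takes a purification $\ket\psi$ on an enlarged space, replaces each $\Pi_i$ by $\openone\otimes\Pi_i$ (which leaves $\tr(\openone\otimes\Pi_i)\ket\psi\bra\psi=\tr\Pi_i\rho=1-\epsilon_i$ unchanged), applies the pure-state bound, and traces out the ancilla; since the partial trace is trace-distance non-increasing and maps the (normalized) sequentially projected purification to the (normalized) sequentially projected $\rho$, the general claim follows.

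For pure $\ket\psi$, write $P\coloneqq\Pi_{N-1}\cdots\Pi_1$, $\ket{\phi_k}\coloneqq\Pi_k\cdots\Pi_1\ket\psi$ with $\ket{\phi_0}=\ket\psi$, $f\coloneqq1-\|P\ket\psi\|^2$, and $E\coloneqq\sum_k\epsilon_k$. The starting point is the telescoping identity $\openone-P=\sum_{k=1}^{N-1}(\openone-\Pi_k)\Pi_{k-1}\cdots\Pi_1$ (collapse the partial sums). Taking $\langle\psi|\cdot|\psi\rangle$ and using $(\openone-\Pi_k)^2=\openone-\Pi_k$,
\[
 \bra\psi(\openone-P)\ket\psi=\sum_{k=1}^{N-1}\langle(\openone-\Pi_k)\psi\,|\,(\openone-\Pi_k)\phi_{k-1}\rangle .
\]
The key move is to apply Cauchy--Schwarz to the whole sum at once --- bounding term by term would only produce $\sum_k\sqrt{\epsilon_k}$ rather than $\sqrt{\sum_k\epsilon_k}$:
\[
 |\bra\psi(\openone-P)\ket\psi|\le\sqrt{\textstyle\sum_k\|(\openone-\Pi_k)\psi\|^2}\;\sqrt{\textstyle\sum_k\|(\openone-\Pi_k)\phi_{k-1}\|^2}.
\]
The first radicand is $E$. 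The second telescopes, since $\|(\openone-\Pi_k)\phi_{k-1}\|^2=\|\phi_{k-1}\|^2-\|\Pi_k\phi_{k-1}\|^2=\|\phi_{k-1}\|^2-\|\phi_k\|^2$, giving $\|\phi_0\|^2-\|\phi_{N-1}\|^2=f$. Hence
\[
 \|\ket\psi-P\ket\psi\|^2=1+\|P\ket\psi\|^2-2\,\mathrm{Re}\bra\psi P\ket\psi=-f+2\,\mathrm{Re}\bra\psi(\openone-P)\ket\psi\le-f+2\sqrt{fE}\le E,
\]
the last step being the elementary bound $-f+2\sqrt{fE}\le E$ (maximize the left side over $f$, attained at $f=E$).

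Finally I would renormalize. Put $\mu\coloneqq\|P\ket\psi\|$ and $\ket{\tilde\psi}\coloneqq P\ket\psi/\mu$ (if $\mu=0$ the previous display forces $E\ge1$, and the claim is vacuous because the trace distance never exceeds $2$). From $\|\ket\psi-\mu\ket{\tilde\psi}\|^2=\|\ket\psi-P\ket\psi\|^2\le E$, expanding the left-hand side yields $\mathrm{Re}\langle\psi|\tilde\psi\rangle\ge(1+\mu^2-E)/(2\mu)$, so
\[
 1-|\langle\psi|\tilde\psi\rangle|^2\le 1-\Bigl(\tfrac{1+\mu^2-E}{2\mu}\Bigr)^2=\frac{4\mu^2-(1+\mu^2-E)^2}{4\mu^2}\le E,
\]
the last inequality being equivalent to $((1-E)-\mu^2)^2\ge0$. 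Since $\|\psi-\tilde\psi\|_1=2\sqrt{1-|\langle\psi|\tilde\psi\rangle|^2}$ for pure states, this gives $\|\psi-\tilde\psi\|_1\le 2\sqrt{E}=2\sqrt{\sum_k\epsilon_k}$. I expect this renormalization to be the main obstacle: superficially, passing from the un-normalized vector to the normalized post-selected state looks as if it must cost an extra square root and degrade the bound to order $E^{1/4}$, and avoiding that requires retaining the exact quadratic identity for $\|\ket\psi-\mu\ket{\tilde\psi}\|^2$ rather than first extracting a crude lower bound on $\mathrm{Re}\bra\psi P\ket\psi$; the exact cancellation that restores the constant is the perfect square $((1-E)-\mu^2)^2$.
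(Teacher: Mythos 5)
Your proof is correct. Note that the paper itself does not prove this lemma at all---it imports it verbatim from Ref.~\cite{pra052331}---so there is no in-paper argument to compare against; what you have written is a complete, self-contained derivation along the lines of Gao's original argument. All the key steps check out: the purification reduction is valid because $\tr_A$ commutes with conjugation by $\openone\otimes\Pi_i$ and is trace-norm non-increasing; the telescoping identity $\openone-P=\sum_k(\openone-\Pi_k)\Pi_{k-1}\cdots\Pi_1$ is right; the single global Cauchy--Schwarz correctly yields $|\bra\psi(\openone-P)\ket\psi|\le\sqrt{fE}$ with the two radicands evaluating to $E$ and $f$ as you claim; and the renormalization via the perfect square ${((1-E)-\mu^2)}^2\ge 0$ is exactly the cancellation needed to avoid the $E^{1/4}$ degradation you anticipate. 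One small edge case is left implicit: the step $|\langle\psi|\tilde\psi\rangle|^2\ge{\left((1+\mu^2-E)/(2\mu)\right)}^2$ follows from the lower bound on $\mathrm{Re}\langle\psi|\tilde\psi\rangle$ only when $1+\mu^2-E\ge 0$; in the complementary case one has $E>1$, so $2\sqrt{E}>2$ exceeds the maximal trace distance and the claim is vacuous, just as in your $\mu=0$ discussion. Stating that explicitly would close the argument completely.
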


\begin{proof}[Proof of Proposition~\ref{prp:approximation}.]
    Setting $\rho=\psi^{\otimes n}$, $\Pi_i = \Pi_{e_i,\psi}$ for each $i=1,2,\ldots,N-1$ in Lemma~\ref{lem:union_bound}, we obtain
    \[
        \left\|\psi^{\otimes n}-\tilde\psi_n\right\|_1\leqq 2\sqrt{\sum_{e\in E}\left(1-\tr \Pi_{e,\psi}\psi^{\otimes n}\right)}.
    \]

    Since the relation Eq.~\eqref{eq:projection} implies
    \[
        1-\tr \Pi_{e,\psi}\psi^{\otimes n}\leqq \frac{{\epsilon'(e)}^2}{4}
    \]
    for each $e\in E$, we have
    \[
        \left\|\psi^{\otimes n}-\tilde\psi_n\right\|_1\leqq\sqrt{\sum_{e\in E}{\epsilon'(e)}^2}.
    \]
\end{proof}

\subsection{The proof of the upper bound in Theorem~\ref{thm:asymptotic}}

Exact construction of the approximate state provides the upper bound of Theorem~\ref{thm:asymptotic}.
\begin{proof}[Proof of the first statement in Theorem~\ref{thm:asymptotic}.]
For any $\epsilon'$ satisfying
\begin{equation}
    \sqrt{\sum_{e\in E}{\epsilon'(e)}^2}\leqq\epsilon,
\end{equation}
Proposition~\ref{prp:approximation} implies that $\left\|\psi^{\otimes n}-\tilde\psi_n\right\| _1\leqq\epsilon$.
Then, for any $i$, it holds that
\begin{equation}
\label{eq:approximate_exact}
    \sum_{e\in E}E_{\textup{GC},i,e}^{n,\epsilon,T}\left(\psi\right) \leqq \sum_{e\in E}\frac{E_{\textup{GC},i',e}^T\left(\tilde\psi_n\right)}{n},
\end{equation}
where $i'$ in the right hand side is uniquely determined.

$E_{\textup{GC},i',e}^T\left(\tilde\psi_n\right)$ for each $e=\{p(v),v\} \in E$ can be evaluated by using Theorem~\ref{thm:exact} as follows.  From Theorem~\ref{thm:exact}, we have
\[
    \begin{split}
        &E_{\textup{GC},i,e}^T\left(\tilde\psi_n\right)\\
        &=\log_2\rank\tr_{\overline{D'_v}}\tilde\psi_n\\
        &=\log_2 \rank\\
        &\quad\tr_{\overline{D'_v}}\frac{\Pi_{e_{N-1},\psi}\cdots\Pi_{e_1,\psi}\psi^{\otimes n}\Pi_{e_1,\psi}\cdots\Pi_{e_{N-1},\psi}}{\tr\Pi_{e_{N-1},\psi}\cdots\Pi_{e_1,\psi}\psi^{\otimes n}\Pi_{e_1,\psi}\cdots\Pi_{e_{N-1},\psi}}.\\
    \end{split}
\]
As for the argument of the logarithm,
we have
\[
    \begin{split}
        &\rank\tr_{\overline{D'_v}}\frac{\Pi_{e_{N-1},\psi}\cdots\Pi_{e_1,\psi}\psi^{\otimes n}\Pi_{e_1,\psi}\cdots\Pi_{e_{N-1},\psi}}{\tr\Pi_{e_{N-1},\psi}\cdots\Pi_{e_1,\psi}\psi^{\otimes n}\Pi_{e_1,\psi}\cdots\Pi_{e_{N-1},\psi}}\\
        &=\rank\tr_{\overline{D'_v}}\Pi_{e_{N-1},\psi}\cdots\Pi_{e_1,\psi}\psi^{\otimes n}\Pi_{e_1,\psi}\cdots\Pi_{e_{N-1},\psi}\\
        &\leqq\rank\tr_{\overline{D'_v}}\Pi_{e,\psi}\\
        &=\rank\tr_{\overline{D'_v}}\left(\Pi_{\rho_e}\otimes\openone\right)\\
        &= \rank \Pi_{\rho_{e}}.
    \end{split}
\]
As $\Pi_{\rho_e}$ is the projector onto the subspace spanned by the eigenvectors corresponding to the eigenvalues of $\rho_e^{\otimes n}$ no less than $2^{-\gamma_e}$ where $\gamma_e=\overline{H}_{\textup{S}}^{{\epsilon'(e)}^2/4}\left(\rho_e^{\otimes n}\right)$, the number of the eigenvectors is no more than $2^{\gamma_e}$, namely,
\[
    \rank\Pi_{\rho_e}\leqq 2^{\gamma_e}.
\]
Therefore, it holds that
\begin{equation}
    \label{eq:evaluate_E1_approximate}
    \begin{split}
        E_{\textup{GC},i',e}^T\left(\tilde\psi_n\right)&\leqq\gamma_e\\
        &=\overline{H}_{\textup{S}}^{{\epsilon'(e)}^2/4}\left(\rho_e^{\otimes n}\right).
\end{split}
\end{equation}

Combining Eqs.~\eqref{eq:approximate_exact} and~\eqref{eq:evaluate_E1_approximate}, we obtain
\[
    \sum_{e\in E}E_{\textup{GC},i,e}^{n,\epsilon,T}\left(\psi\right) \leqq \sum_{e\in E}\frac{\overline{H}_{\textup{S}}^{{\epsilon'(e)}^2/4}\left(\rho_e^{\otimes n}\right)}{n}.
\]
\end{proof}

\subsection{Proof of the lower bound in Theorem~\ref{thm:asymptotic}}

\begin{proof}[Proof of the second statement in Theorem~\ref{thm:asymptotic}]
Consider a bipartition of the tree $T=(V,E)$ with respect to $e=\{p(v),v\} \in E$.  We represent the $(n,\epsilon)$-approximate \textit{bipartite} entanglement cost of $\Ket\psi\in\mathcal{H}_t$ for the bipartition with respect to $e$ by $E_{\textup{C},e}^{n,\epsilon,T}{\left(\psi\right)}$.  In the state construction task corresponding to $E_{\textup{C}, e}^{n,\epsilon,T}{\left(\psi\right)}$, quantum communications between the parties within the same side of the partition are allowed but communications across the partitions are limited to classical communications. Since the allowed resources in the state construction task corresponding to $E_{\textup{GC},i,e}^{n,\epsilon,T}{\left(\psi\right)}$ are more restricted than those corresponding to $E_{\textup{C},e}^{n,\epsilon,T}{\left(\psi\right)}$, we have
\begin{equation*}
    E_{\textup{GC},i,e}^{n,\epsilon,T}{\left(\psi\right)}\geqq E_{\textup{C},e}^{n,\epsilon,T}{\left(\psi\right)}.
\end{equation*}

The  $(n,\epsilon)$-approximate bipartite entanglement cost can bounded by the second-order asymptotic analysis presented in Ref.~\cite{RefWorks:160} (Theorem 8 in Ref.~\cite{RefWorks:160}) as
\begin{equation*}
E_{\textup{C},e}^{n,\epsilon,T}{\left(\psi\right)} \geqq \frac{\overline{H}_{\textup{S}}^{\epsilon^2/4+\eta}\left(\rho_e^{\otimes n}\right)-\delta+\log_2 \eta}{n}.
\end{equation*}
Therefore, for any $i$, $e\in E$, $\eta>0$ and $\delta>0$, we obtain
\[
    E_{\textup{GC},i,e}^{n,\epsilon,T}{\left(\psi\right)}\geqq \frac{\overline{H}_{\textup{S}}^{\epsilon^2/4+\eta}\left(\rho_e^{\otimes n}\right)-\delta+\log_2 \eta}{n}.
\]
\end{proof}

\subsection{Proof of Corollary~\ref{cor:asymptotic_cost}}

\begin{proof}[Proof of Corollary~\ref{cor:asymptotic_cost}]
    For any $i$, combining Theorem~\ref{thm:asymptotic} and the second-order expansion of the quantum information spectrum entropy yields
    \begin{align*}
            &\lim_{\epsilon\to 0}\lim_{n\to\infty}\sum_{e\in E}E_{\textup{GC},i,e}^{n,\epsilon,T}\left(\psi\right)\\
            &\leqq \lim_{\epsilon\to 0}\lim_{n\to\infty}\sum_{e\in E}\frac{\overline{H}_{\textup{S}}^{{\epsilon'(e)}^2/4}\left(\rho_e^{\otimes n}\right)}{n}\\
            &= \sum_{e\in E}S\left(\rho_e\right),
    \end{align*}
    and
    \[
        \begin{split}
            &\lim_{\epsilon\to 0}\lim_{n\to\infty}E_{\textup{GC},i,e}^{n,\epsilon,T}\left(\psi\right)\\
            &\geqq \lim_{\epsilon\to 0}\lim_{n\to\infty}\frac{\overline{H}_{\textup{S}}^{\epsilon^2/4+\eta}\left(\rho_e^{\otimes n}\right)-\delta+\log_2 \eta}{n}\\
            &= S\left(\rho_e\right).
        \end{split}
    \]

    Therefore, the approximate edge graph-associated entanglement costs for all $i$ coincide in the asymptotic limit, and we obtain, for each $e\in E$,
    \[
        E_{\textup{GC},i',e}^{\infty,T}\left(\psi\right)=\lim_{\epsilon\to 0}\lim_{n\to\infty}E_{\textup{GC},i,e}^{n,\epsilon,T}\left(\psi\right)=S\left(\rho_e\right),
    \]
    where $i'$ on the left-hand side is uniquely determined.
\end{proof}


\end{document}